\newcommand{\bbeta}{ \mbox{\boldmath $\beta$}}
\newcommand{\bA}{\mbox{\bf A}}
\newcommand{\bX}{\mbox{\bf X}}
\newcommand{\bB}{\mbox{\bf B}}
\newcommand{\bZ}{\mbox{\bf Z}}
\newcommand{\bY}{\mbox{\bf Y}}
\newcommand{\bmm}{\mbox{\bf m}}
\newcommand{\bs}{\mbox{\bf s}}
\newcommand{\bu}{\mbox{\bf u}}
\newcommand{\bv}{\mbox{\bf v}}
\newcommand{\bV}{\mbox{\bf V}}
\newcommand{\bI}{\mbox{\bf I}}
\newcommand{\bD}{\mbox{\bf D}}
\newcommand{\bW}{\mbox{\bf W}}
\newcommand{\bR}{\mbox{\bf R}}
\newcommand{\bU}{\mbox{\bf U}}
\newcommand{\bS}{\mbox{\bf S}}
\newcommand{\bk}{\mbox{\bf k}}
\newcommand{\calS}{{\cal S}}
\newcommand{\beq}{ \begin{equation}}
\newcommand{\eeq}{ \end{equation}}
\newcommand{\beqn}{ \begin{eqnarray}}
\newcommand{\eeqn}{ \end{eqnarray}}
\renewcommand{\arraystretch}{1.5}
\title{Supplementary Materials for Two-Stage Estimators for Spatial Confounding with Point-Referenced Data}
\author{Nate Wiecha$^{1,*}$\email{nbwiecha@ncsu.edu}, 
Jane Hoppin$^{2}$, and Brian J. Reich$^{1}$ \\
$^{1}$Department of Statistics, North Carolina State University, Raleigh, North Carolina, U.S.A \\
$^{2}$Department of Biological Sciences, North Carolina State University, Raleigh, North Carolina, U.S.A}
\begin{document}


\maketitle 
\renewcommand\thesection{S\arabic{section}}

\section{Overview of related semiparametric theory}
\subsection{Overview of related semiparametric theory}\label{s:literature}
\renewcommand\thealgorithm{S\arabic{algorithm}}

This section briefly summarizes the semiparametric literature on root-$n$ consistency and asymptotic normality of estimators similar to gSEM. More complete explanations can be found in \cite{andrews1994asymptotics} and \cite{dml}, from which this summary draws heavily. For $i\in\{1,...,n\}$, let $Y_i\in \mathbb{R}$ be the response variable, $A_i$ be the treatment variable, and $\bS_i$ be the spatial location contained in spatial domain $\mathcal{S}\subset \mathbb{R}^d$. A simple model is:
\begin{align}
    \begin{split}\label{e:intro_model}
        Y_i &= A_i\beta_0 + g_0(\bS_i) + U_i\\
        A_{i} &= m_0(\bS_i) + V_{i}.
    \end{split}
\end{align}
For illustration in this section we ignore covariates and assume $A_i$ is a scalar. In (\ref{e:intro_model}), $\beta_0$ is the regression coefficient of interest, $g_0$ and $m_0$ are unknown functions treated as nuisance parameters, and $U_i$ and $V_{i}$ are error terms with finite, non-zero variance such that $E(U_i|A_i, \bS_i) = 0$ and $E(V_i|\bS_i)=0$ for $i=1, ..., n$. 
Denote by $h_0(\bs) := E(Y_i|\bS_i=\bs)$ (marginalizing over $A_i$), and let $\eta_0$ denote the vector of functions $(g_0, m_0)$ or $(h_0, m_0)$. 

Model (\ref{e:intro_model}) can result in invalid inference due to spatial confounding if $g_0, m_0$ result in dependence between $\bA$ and $g_0(\bS)$. For example, if $g_0=m_0$, this corresponds to an unmeasured confounder which is a function of space affecting both $Y_1, ..., Y_n$ and $A_1, ..., A_n$, and in a spatial linear mixed model, this dependence is typically ignored. For example, spatial random effects are commonly used to implement a smoother to model the unknown function $g_0$, but are usually marginalized into the conditional variance of $\bY$ without consideration of dependence with $\bA$ \citep{paciorek2010importance}. Alternatively, some basis functions used to model $g_0$ may be collinear with $\bA$, in which case penalized estimation of the coefficients on the basis functions leads to bias in estimation of $\beta_0$ \citep{reich2006effects}.

\begin{algorithm}[H]\caption{Geoadditive Structural Equation Model (gSEM) estimation of $\beta_0$}\label{a:gsem}
\begin{algorithmic}
    \Require Response vector $\bY \in \mathbb{R}^{n}$, location matrix $\bS \in \mathbb{R}^{n \times d}$, treatment vector $\bA \in \mathbb{R}^{n}$
    \Ensure Estimate $\widehat{\beta}_{gSEM}$ of $\beta_0 \in \mathbb{R}$
    \State $\widehat{h}_0$ $\leftarrow$ Estimate of $h_0$ from a spatial regression model for $\bY$ onto $\bS$, such as a spline regression
    \State $\bR_Y \leftarrow \bY - \widehat{h}_0(\bS)$
    \State $\widehat{m}_0 \leftarrow$ Estimate of $m_0$ from a spatial regression model for $\bA$ onto $\bS$, such as a spline regression
    \State $\bR_A \leftarrow \bA - \widehat{m}_0(\bS)$
    \State $\widehat{\beta}_{gSEM} \leftarrow (\bR_A^T\bR_A)^{-1}\bR_A^T\bR_Y$, i.e., obtain $\widehat{\beta}_{0}$ by regressing the residuals $\bR_Y$ onto the residuals $\bR_A$
    \State Return $\widehat{\beta}_{gSEM}$
\end{algorithmic}
\end{algorithm}
The gSEM procedure is in Algorithm \ref{a:gsem}. Variance is typically estimated via bootstrap when gSEM is used in simulation studies \citep{guan_2020, gilbert}. Aside from the method of estimating $h_0$ and $m_0$ and variance estimation, this is identical to the procedure considered in \cite{robinson1988root}. Intuitively, it is approximately orthogonalizing $\bY$ and $\bA$ with respect to $\bS$ and therefore removing effects of spatial confounding. 

\subsubsection{Orthogonality of $\widehat{\beta}_{gSEM}$ and nuisance parameter estimates}\label{s:orthogonality}
To study $\widehat{\beta}_{gSEM}$'s asymptotic properties, note that $\widehat{\beta}_{gSEM}$ is equivalently defined as the solution to
\begin{align*}
    \frac{1}{n}\sum_{i=1}^n \psi(Y_i, A_i, \bS_i; \widehat{\beta}_{gSEM}, \widehat{\eta}_0) = 0
\end{align*}
where $\psi(Y_i, A_i, \bS_i; \beta, \eta)=\{Y_i - h(\bS_i) - \beta(A_i-m(\bS_i))\}(A_i-m(\bS_i))$, $\eta=(h,m)$, $\widehat{\eta}_0 = (\widehat{h}_0, \widehat{m}_0)$, and $\widehat{h}_0$ and $\widehat{m}_0$ are preliminary estimates of the nuisance parameters $h_0$ and $m_0$. The score function $\psi$ is used in \cite{robinson1988root}, and also studied in \cite{andrews1994asymptotics} and \cite{dml}. This score function results in a form of orthogonality between $\widehat{\beta}_{gSEM}$ and $\widehat{\eta}_0$: under the model (\ref{e:intro_model}), replacement of $\eta_0$ by $\widehat{\eta}_0$ in $\frac{1}{n}\sum_{i=1}^n E[\psi(Y_i, A_i, \bS_i; \beta_0, \eta_0)]$ has an effect that is $o_P(n^{-1/2})$ when $\widehat{\eta}_0$ is close to $\eta_0$ and $\widehat{h}_0$, $\widehat{m}_0$ each converge to their true values at $o_P(n^{-1/4})$ rate and obey smoothness conditions, and other regularity conditions are assumed \citep{andrews1994asymptotics}.\footnote{Estimating the nuisance parameters using spline regression may not meet the sufficient conditions for good asymptotic behavior presented in \citep{andrews1994asymptotics}, which used nonparametric kernel regression estimators.} Alternatively, \cite{dml} uses the term (near-) Neyman orthogonality. Neyman orthogonality means that the Gateaux functional derivative\footnote{Using the notation of \cite{dml}, the Gateaux derivative is defined as $D_r[\eta - \eta_0] := \partial_r\biggl\{E\bigl[\psi(W_; \beta_0,\eta_0 + r(\eta - \eta_0)\bigr]\biggr\}$ for $r \in [0, 1)$, $\eta \in T$ where $T$ is a convex subspace of a normed vector space, and $\partial_{\eta}E[\psi(W; \beta_0, \eta_0)]:=D_0[\eta-\eta_0]$.} with respect to $\eta$ is $0$ at the true nuisance parameter values:
\[ \partial_{\eta}E[\psi(Y_i, A_i, \bS_i; \beta_0, \eta_0)][\eta - \eta_0]=0, \]
and in the case of near-Neyman orthogonality, that it is $o_P(n^{-1/2})$. Similarly, this indicates that close to $\eta_0$, there is little effect on $E[\psi(Y_i, A_i, \bS_i; \beta_0, \eta_0)]$ when replacing $\eta_0$ by its estimate. The orthogonality property means that the estimates $\widehat{h}_0$ and $ \widehat{m}_0$ can converge to $h_0$ and $m_0$ at rates slower than $o_p(n^{-1/2})$ and asymptotically this deviation from $(h_0, m_0)$ does not affect affect the variance of $\widehat{\beta}_{gSEM}$ \citep{andrews1994asymptotics}.

\subsubsection{Stochastic equicontinuity or sample splitting}

The orthogonality property of $\psi$ must be paired with a means of ensuring that estimation error of $\eta_0$ does not cause asymptotic bias, primarily due to overfitting \citep{dml}. This is achieved by a property of empirical processes called stochastic equicontinuity in \cite{andrews1994asymptotics} and by sample splitting in \cite{dml}. Stochastic equicontinuity follows from Donsker conditions \citep{belloni2018program} which limit the complexity of $\eta_0$ and $\widehat{\eta}_0$. In \cite{andrews1994asymptotics}, these are primarily satisfied by placing a smoothness requirement on $\eta_0$ and $\widehat{\eta}_0$. In contrast, the sample-splitting approach of \cite{dml} avoids requiring Donsker conditions, and therefore additional smoothness requirements. Since by using sample splitting \cite{dml} allows estimation of $\eta_0$ by essentially any machine learning model, they term their method Double Machine Learning (DML). By using cross-fitting, the DML estimators of \cite{dml} still use the full sample so do not lose power. We therefore rely on \cite{dml} for our theoretical analysis of an estimator similar to gSEM, and term this estimator Double Spatial Regression since we are using Double Machine Learning with spatial regression. 


\section{Double Spatial Regression estimator for theoretical analysis}\label{s:DML_algorithm}

In this section, we consider an algorithm similar to gSEM {Appendix B} \citep{thaden2018structural} and the estimator in \cite{robinson1988root}, which allows derivation of explicit regularity conditions. 
For theoretical analysis, we combine treatment variables $\bA_i$ and covariates $\bZ_i$ into a combined vector $\bX_i$ of regressors of length $p$. The model considered is then, for $i=1,...,n$ and $j=1, ..., p$:
\begin{align}
\begin{split}\label{e:model_theory}
    Y_i &= \bX_i^T\bbeta_0 + g_0(\bS_i) + U_i,\quad \quad E(U_i|\bX_i, \bS_i)=0\\
    X_{ij} &= m_{0j}(\bS_i) + V_{ij}, \quad \quad \quad \quad \quad E(V_{ij}|\bS_i=0).
\end{split}
\end{align}

{We treat covariates as treatment variables since existing results from machine learning theory, such as \cite{eberts2013optimal}, do not typically incorporate parametric adjustment for covariates into their analysis. We use these existing results to fully justify and analyze our estimator. We are not aware of available theoretical results that would allow parametric adjustment for covariates within our framework.}
DSR uses Kriging (although other nonparametric estimators can be used) which requires a working correlation function be specified, and for theoretical analysis we use the {squared exponential} correlation function \citep{williams2006gaussian}:
\begin{align*}
    C_{\gamma}(\bS_i, \bS_j):= \exp\biggl(-\frac{\|\bS_i - \bS_j\|_2^2}{\gamma^2}\biggr).
\end{align*}
For matrix inputs $\bA$ and $\bB$ to a correlation function, $C(\bA, \bB)$ denotes the correlation matrix where the element in row $i$ and column $j$ equals $C(\bA_{i\cdot}, \bB_{j \cdot})$.

Denote by $h_0(\bs)$ the conditional expectation $E(Y_i|\bS_i=\bs)$ (note that this does not depend on $\bX_i$). The number of elements in fold $k$ is denoted by $|\bk|$. On each fold $k$, Kriging \citep{stein1999interpolation} is used to obtain cross-fitted estimates for $h_0(\bS_{\bk})$ and $m_{0j}(\bS_{\bk})$ for $j=1,...,p$:
\begin{align}
    \widehat{h}_0(\bS_{\bk}) := C_{\gamma_{0k}}(\bS_{\bk}, \bS_{\bk^C})\biggl(C_{\gamma_{0k}}(\bS_{\bk^C}, \bS_{\bk^C}) +|\bk^C|\lambda_{0k} \bI\biggr)^{-1}\bY_{\bk^C}\label{eq:kriging_eqn1}\\
    \widehat{m}_{0j}(\bS_{\bk}) := C_{\gamma_{jk}}(\bS_{\bk}, \bS_{\bk^C})\biggl(C_{\gamma_{jk}}(\bS_{\bk^C}, \bS_{\bk^C}) + |\bk^C|\lambda_{jk} \bI\biggr)^{-1}\bX_{\bk^C, j} \label{eq:kriging_eqn2}.
\end{align}
Predictions are combined across folds to obtain $\widehat{h}_0(\bS)$ and $\widehat{m}_{01}(\bS), ..., \widehat{m}_{0p}(\bS)$. 
The hyperparameters $\lambda_{0k}, ..., \lambda_{pk}$, $\gamma_{0k}, ..., \gamma_{pk}$ depend on $k$ because our theoretical analysis requires that they are selected each time predictions are obtained (i.e., on each fold); they are selected using a training-validation split of $\bW_{\bk^C}$ \citep{eberts2013optimal}. The data in $\bW_{\bk^C}$ is split in two halves, a ``training" half and a ``validation" half. For each possible combination of hyperparameters ($\lambda$ and $\gamma$) considered, the training half of $\bW_{\bk^C}$ is used to obtain predictions on the validation half of $\bW_{\bk^C}$. The pair of hyperparameters with lowest mean squared error (MSE) on the validation half of $\bW_{\bk^C}$ is selected for estimating $h_0, m_{01}, ..., m_{0p}$ evaluated at the locations corresponding to the data in $\bW_{\bk}$ \citep{eberts2013optimal}. {This procedure is adopted in order to satisfy the requirements of \cite{eberts2013optimal}, which proves the convergence rate of certain GP regression estimates, in order to allow derivation of clear regularity conditions for DSR. }

Letting $\widehat{\bV}_{\cdot j} = \bX_{\cdot j} - \widehat{m}_{0j}(\bS)$ and $\widehat{U}_i = Y_i - \widehat{h}_0(\bS_i) - \widehat{\bV}_i^T\widehat{\bbeta}_0$, the DSR estimator and its approximate variance are:
\begin{align}
    \widehat{\bbeta}_0 &= (\widehat{\bV}^T\widehat{\bV})^{-1}\widehat{\bV}^T(\bY - \widehat{h}_0(\bS))\label{e:dsr_theory_estimator}\\
    \widehat{Var}(\widehat{\bbeta}_0) &= (\widehat{\bV}^T\widehat{\bV})^{-1} \sum_{i=1}^n \left[\widehat{U}_i^2\widehat{\bV}_i\widehat{\bV}_i^T\right](\widehat{\bV}^T\widehat{\bV})^{-1},\label{e:dsr_theory_var_estimator}
\end{align}
which are the estimators from \cite{dml}. The algorithm is presented in Algorithm \ref{a:SDML}. Note that this estimator is essentially the same as gSEM except that it uses sample splitting, the method of estimating the latent functions of space is not specified with gSEM, and gSEM has lacked a closed-form variance estimate; gSEM in turn is essentially identical to the estimator in \cite{robinson1988root}, except that \cite{robinson1988root} uses nonparametric kernel regression estimators and provides a variance estimate.

\begin{algorithm}\caption{Double Spatial Regression estimation of $\beta_0$ for theoretical study}\label{a:SDML}
\begin{algorithmic}
    \Require Centered response vector $\bY \in \mathbb{R}^{n}$, location matrix $\bS \in \mathbb{R}^{n \times d}$, design matrix $\bX \in \mathbb{R}^{n \times p}$ with centered columns.
    \Ensure Estimate $\widehat{\bbeta}_0$ of $\bbeta_0 \in \mathbb{R}^p$ and estimate $\widehat{Var}(\widehat{\bbeta}_0)$ of $Var(\widehat{\bbeta}_0) \in \mathbb{R}^{p\times p}$
    \State Randomly partition the data into $K$ folds so that the size of each fold is $\frac{n}{K}$. 
    \For{$k=1, ..., K$}
    \State Select the hyperparameters $\gamma_{0,k}, \lambda_{0,k} \in \mathbb{R}$ by a training-validation approach, using data in $\bk^C$. The value of $\lambda_{0k}$ is selected from an evenly-spaced grid of $\frac{1}{2n}$ values in $(0, 1]$, the lengthscale $\gamma_{0k}$ is selected from an evenly-spaced grid of $\frac{1}{2n^{-1/4}}$ values in $(0,1]$.
    \State $\widehat{h}_0(\bS_{\bk}) \leftarrow C_{\gamma_{0k}}(\bS_{\bk}, \bS_{\bk^C})\biggl(C_{\gamma_{0k}}(\bS_{\bk^C}, \bS_{\bk^C}) +|\bk^C|\lambda_{0k} \bI\biggr)^{-1}\bY_{\bk^C}$
    \For{$j=1, ..., p$}
    \State Select the hyperparameters $\gamma_{j,k}, \lambda_{j,k} \in \mathbb{R}$ by a training-validation approach, using data in $\bk^C$. The value of $\lambda_{jk}$ is selected from an evenly-spaced grid of $\frac{1}{2n}$ values in $(0, 1]$, the lengthscale $\gamma_{jk}$ is selected from an evenly-spaced grid of $\frac{1}{2n^{-1/4}}$ values in $(0,1]$.
    \State $\widehat{m}_{0j}(\bS_{\bk}) \leftarrow C_{\gamma_{jk}}(\bS_{\bk}, \bS_{\bk^C})\biggl(C_{\gamma_{jk}}(\bS_{\bk^C}, \bS_{\bk^C}) + |\bk^C|\lambda_{jk} \bI\biggr)^{-1}\bX_{\bk^C, j}$
    \EndFor
    \EndFor
    
    \State Combine the predictions from each fold to obtain $\widehat{h}_0(\bS) \in \mathbb{R}^{n}, \widehat{\bmm}_0(\bS) \in \mathbb{R}^{n\times p}$ 
    \State $\widehat{\bV} \leftarrow \bX - \widehat{\bmm}_0(\bS)$ 
    \State $\widehat{\mathbf{J}} \leftarrow (\widehat{\bV}^T\widehat{\bV})^{-1}$
    \State $\widehat{\bbeta}_0 \leftarrow \widehat{\mathbf{J}}\widehat{\bV}^T(\bY - \widehat{h}_0(\bS))$
    
    \State $\widehat{Var}(\widehat{\bbeta}_0) \leftarrow \widehat{\mathbf{J}} \left(\sum_{i=1}^n \left[-\widehat{\bV}_{i}\widehat{\bV}_{i}^T\widehat{\bbeta}_0 + \widehat{\bV}_i (Y_i - \widehat{h}_0(\bS_i))\right]^2\right)\widehat{\mathbf{J}}^{T} $
    \State Return $\widehat{\bbeta}_0, \widehat{Var}(\widehat{\bbeta}_0)$
\end{algorithmic}
\end{algorithm}

\subsection{Double Spatial Regression regularity conditions}\label{s:regularity}
 
Our theoretical analysis uses the results on convergence rates of GP regression from \cite{eberts2013optimal} and asymptotic properties of DML estimators from \cite{dml}, which require fast-enough convergence of estimates of the latent functions, to obtain explicit regularity conditions on the latent functions of space $h_0$ and $m_{01}, ..., m_{0p}$ under which DSR is root-$n$ asymptotically normal and consistent. The observations $\bW_i=(Y_i, \bX_i, \bS_i)$ are assumed to be i.i.d. from a probability distribution $P$ with density function $p(w)$, and for a function $f$, $\| f \|_{P,q} := \{ \int |f(w)|^qp(w)dw\}^{1/q}$. {Note that although the $\bW_i$ are assumed to be drawn i.i.d. from $P$, the distribution of $Y_i|\bX_i$ is still assumed to, in general, exhibit spatial dependence induced by marginalization over $\bS_i$.} The Euclidean norm is denoted $\|\cdot\|$, and $\|\cdot\|_p$ and $L_p(\mathbb{R}^d)$ are with respect to the Lebesgue measure.


As the assumptions govern the probability distribution $P$ generating the i.i.d. random variables $\bW_i$, regularity conditions are described for all draws, indexed by $i$, from $P$. The assumptions for the DSR estimator obtained by Algorithm \ref{a:SDML} are:

\begin{enumerate}
    \item[A1)] The data are generated by (\ref{e:model_theory}).
    \item[A2)] The errors $U_i, \bV_i$ are such that $E(U_i|\bX_i, \bS_i) = E(V_{ij}|\bS_i) = 0$, for $j=1,...,p$, with $0 < E(U_i^2|\bS_i) \leq C$ and $0 < E(V_{ij}^2|\bS) \leq C$ for some constant $C>0$ and all $\bS_i \in \mathcal{S}$. Also, $E(U_i^2\bV_i\bV_i^T)$ and $E(\bV_i\bV_i^T)$ have minimum eigenvalues bounded away from 0. The errors $U_i$ and $V_{ij}$ are either contained in some interval or are normally distributed.
    \item[A3)] The spatial locations $\bS_i$ reside in a region $\mathcal{S}$ contained in a $\|\cdot \|$-unit ball in $\mathbb{R}^d$, and the boundary of $\mathcal{S}$ has $P$-probability 0. The marginal distribution $P_S$ of $\bS_i$ (derived from $P$) is absolutely continuous on $\calS$ and has a density $p_S \in L_q(\mathbb{R}^d)$ for some $q \geq 1$. 
    \item[A4)] If $U_i$ is bounded, then $h_0$ is such that $Y \in [-M_0, M_0]$ for some $M_0>0$, and if $U_i$ is normally distributed then $h_0 \in [-1, 1]$. Similarly, if $V_{ij}$ is bounded, then $m_{0j}$ is such that $X_{ij} \in [-M_j, M_j]$ for some $M_j > 0$, and if $V_{ij}$ is normally distributed then $m_{0j} \in [-1, 1]$, for $j=1, ..., p$.
    \item[A5)] The estimates $\widehat{h}_0, \widehat{m}_{01}, ..., \widehat{m}_{0p}$ are obtained using GP regression as described in \cite{eberts2013optimal}. The function $\widehat{h}_0$ is clipped so that if $U_i$ is bounded, $|\widehat{h}_0| \leq M_0$ and if $U_i$ is normally distributed, $|\widehat{h}_0| \leq \min \{1, 4\sqrt{C_0}\sqrt{\ln(n)} \}$ for some $C_0 > 0$ that exceeds $Var(U_i | \bS_i)\, \forall \bS_i \in \mathcal{S}$. Similarly, for $j=1,...,p$, $\widehat{m}_{0j}$ is clipped so that if $V_{ij}$ is bounded, $|\widehat{m}_{0j}| \leq M_j$ and if $V_{ij}$ is normally distributed, $|\widehat{m}_{0j}| \leq \min \{1, 4\sqrt{C_j}\sqrt{\ln(n)} \}$ for some $C_j > 0$ that exceeds $Var(V_{ij} | \bS)\, \forall \bS_i \in \mathcal{S}$.
    \item[A6)] For $j=1,...,p$, the functions $m_{0j}$ reside in the Besov space $B^{\alpha_X}_{2s, \infty}$ where the smoothness order $\alpha_X > \frac{d}{2}$, $\alpha_X \geq 1$, and $\frac{1}{s} + \frac{1}{q} = 1$ and $s \geq 1$, and $h_0$ resides in the Besov space $B^{\alpha_Y}_{2s, \infty}$ where the smoothness order $\alpha_Y > \frac{d^2}{4\alpha_X}$ and $\alpha_Y \geq 1$. Furthermore, $h_0, m_{01}, ..., m_{p0} \in L_2(\mathbb{R}^d)\cap L_{\infty}(\mathbb{R}^d)$.
\end{enumerate}

In the common scenario $d=2$, a stronger but more interpretable alternative to Assumption A6 is that $h_0, m_{01}, ..., m_{0p}$ are each in $L_2(\mathbb{R}^d)\cap L_{\infty}(\mathbb{R}^d)$, each has at least two (weak) derivatives, and these functions and derivatives are all in $L_{2s}(\mathbb{R}^d)$. Per the discussion following Theorem 3.6 in \cite{eberts2013optimal}, $U_i, V_{i1}, ..., V_{ip}$ can follow other light-tailed distributions aside from normal.


Theorem \ref{t:1} states that the DSR estimator in (\ref{e:dsr_theory_estimator}) is root-$n$ asymptotically normal and consistent under the above assumptions.
\begin{theorem}\label{t:1}
If Assumptions A1 -- A6 are met, and $\widehat{\bbeta}_0$ and $\widehat{Var}(\widehat{\bbeta}_0)$ are obtained by Algorithm \ref{a:SDML}, then
\begin{align*}
 \sqrt{n}\boldsymbol{\Sigma}^{-1/2}(\widehat{\bbeta}_0- \bbeta_0) \xrightarrow[]{D} N(\mathbf{0}, \mathbf{I}_p), \text{and} \\
 \widehat{Var}(\widehat{\bbeta}_0)^{-1/2}(\widehat{\bbeta}_0 - \bbeta_0) \xrightarrow[]{D} N(\mathbf{0},\mathbf{I}_p),  
\end{align*}
where $\boldsymbol{\Sigma}=E[ \bV_i\bV_i^T]^{-1}E[U_i^2\bV_i\bV_i^T](E[ \bV_i\bV_i^T]^{-1})$ is the approximate variance of $\widehat{\bbeta}_0$.

\end{theorem}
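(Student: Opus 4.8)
The plan is to recognize the DSR estimator as a cross-fitted DML estimator of \cite{dml} built on a linear, Neyman-orthogonal score, and then to discharge the hypotheses of their linear-score theorem (Assumptions 3.1 and 3.2, yielding their Theorem 3.1) one at a time, using \cite{eberts2013optimal} to supply the nuisance convergence rates. First I would write the score in the stacked-regressor notation of (\ref{e:model_theory}): for $\bW_i=(Y_i,\bX_i,\bS_i)$ and $\eta=(h,\bmm)$,
\[
\psi(\bW_i;\bbeta,\eta) = \{\bX_i - \bmm(\bS_i)\}\{Y_i - h(\bS_i) - (\bX_i-\bmm(\bS_i))^T\bbeta\},
\]
which is linear in $\bbeta$, i.e. $\psi=\psi^a(\bW_i;\eta)\bbeta+\psi^b(\bW_i;\eta)$ with $\psi^a(\bW_i;\eta)=-\{\bX_i-\bmm(\bS_i)\}\{\bX_i-\bmm(\bS_i)\}^T$ and $\psi^b(\bW_i;\eta)=\{\bX_i-\bmm(\bS_i)\}\{Y_i-h(\bS_i)\}$, so that $J_0:=E[\psi^a(\bW_i;\eta_0)]=-E[\bV_i\bV_i^T]$, nonsingular by A2. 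Using $h_0(\bS_i)=\bmm_0(\bS_i)^T\bbeta_0+g_0(\bS_i)$, the score at the truth collapses to $U_i\bV_i$, so the moment condition $E[\psi(\bW_i;\bbeta_0,\eta_0)]=0$ holds because $E[U_i|\bX_i,\bS_i]=0$. The same expansion confirms the Neyman orthogonality asserted in Section \ref{s:orthogonality}: expanding $E[\psi(\bW_i;\bbeta_0,\widehat\eta)]$ around $\eta_0$, every first-order term carries a factor $E[\bV_i|\bS_i]=0$ or $E[U_i|\bS_i]=0$ and vanishes, leaving only the second-order remainder $E[(h_0-\widehat h_0)(\bmm_0-\widehat\bmm_0)]+E[\{\bbeta_0^T(\widehat\bmm_0-\bmm_0)\}(\bmm_0-\widehat\bmm_0)]$.

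Next I would verify the regularity half of their Assumption 3.2: existence of the moments of $\psi$ and non-degeneracy of the sandwich variance. Boundedness or Gaussianity of $U_i,V_{ij}$ (A2), together with the clipping and boundedness of the estimated and true nuisances (A4, A5), supplies the required finite moments, while the minimum-eigenvalue conditions on $E[\bV_i\bV_i^T]$ and $E[U_i^2\bV_i\bV_i^T]$ in A2 give invertibility of $J_0$ and positive-definiteness of $\bSigma$.

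The crux, and the step I expect to be the main obstacle, is establishing the nuisance rate conditions. By the orthogonality computation the plug-in bias is controlled by $\|\widehat h_0-h_0\|_{P,2}\,\|\widehat m_{0j}-m_{0j}\|_{P,2}$ and $\|\widehat m_{0j}-m_{0j}\|_{P,2}^2$, so DML requires $\|\widehat m_{0j}-m_{0j}\|_{P,2}=o_P(n^{-1/4})$ and $\|\widehat h_0-h_0\|_{P,2}\,\|\widehat m_{0j}-m_{0j}\|_{P,2}=o_P(n^{-1/2})$. I would obtain these from Theorem 3.6 of \cite{eberts2013optimal}: under the Besov smoothness in A6, the density condition A3 ($p_S\in L_q$), and the boundedness of A4--A5, the training--validation selection of $(\lambda,\gamma)$ on the grids specified in Algorithm \ref{a:SDML} yields squared-$L_2$ learning rates that translate into $\|\widehat m_{0j}-m_{0j}\|_{P,2}=\widetilde O_P(n^{-\alpha_X/(2\alpha_X+d)})$ and $\|\widehat h_0-h_0\|_{P,2}=\widetilde O_P(n^{-\alpha_Y/(2\alpha_Y+d)})$. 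A short computation then shows $\alpha_X>d/2$ gives the $o_P(n^{-1/4})$ rate, while $\alpha_Y>d^2/(4\alpha_X)$ is exactly the inequality $\tfrac{\alpha_X}{2\alpha_X+d}+\tfrac{\alpha_Y}{2\alpha_Y+d}>\tfrac12$ needed to make the product $o_P(n^{-1/2})$. The delicate points are matching the prescribed hyperparameter grids to the hypotheses of \cite{eberts2013optimal}, converting their excess-risk bounds into the $\|\cdot\|_{P,2}$ rates DML uses (tracking the logarithmic and $(s,q)$ factors), and checking that the per-fold rates hold conditionally on the out-of-fold data so that cross-fitting, which replaces any Donsker or stochastic-equicontinuity requirement, applies. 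With $\widehat\eta$ shown to lie in a shrinking neighborhood $\calT_n$ of $\eta_0$ with high probability, all parts of Assumptions 3.1--3.2 hold.

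Finally, I would invoke the linear-score DML theorem (Theorem 3.1 of \cite{dml}) to conclude $\sqrt n\,\bSigma^{-1/2}(\widehat\bbeta_0-\bbeta_0)\xrightarrow[]{D}N(\mathbf{0},\bI_p)$ with $\bSigma=J_0^{-1}E[\psi\psi^T]J_0^{-T}=E[\bV_i\bV_i^T]^{-1}E[U_i^2\bV_i\bV_i^T]E[\bV_i\bV_i^T]^{-1}$, matching the stated form since $E[\psi(\bW_i;\bbeta_0,\eta_0)\psi^T]=E[U_i^2\bV_i\bV_i^T]$. The studentized statement follows from the companion variance-consistency conclusion of Theorem 3.1, namely that $\widehat{Var}(\widehat\bbeta_0)$ is ratio-consistent for $\bSigma/n$: the consistency of $\widehat\eta$ gives $\widehat\bV^T\widehat\bV/n\xrightarrow[]{P}E[\bV_i\bV_i^T]$ and $\frac1n\sum_i\widehat U_i^2\widehat\bV_i\widehat\bV_i^T\xrightarrow[]{P}E[U_i^2\bV_i\bV_i^T]$, after which Slutsky's theorem delivers $\widehat{Var}(\widehat\bbeta_0)^{-1/2}(\widehat\bbeta_0-\bbeta_0)\xrightarrow[]{D}N(\mathbf{0},\bI_p)$.
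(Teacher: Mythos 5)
Your proposal is correct and follows essentially the same route as the paper: cast DSR as a cross-fitted DML estimator with the linear score $\psi^a\bbeta+\psi^b$, verify Neyman orthogonality and the moment/eigenvalue conditions of Assumptions 3.1--3.2 of \cite{dml} from A1--A5, import the $L_2(P)$ rates for the training-validation-tuned Gaussian-kernel regressors from \cite{eberts2013optimal}, and check that $\alpha_X>d/2$ and $\alpha_Y>d^2/(4\alpha_X)$ yield exactly the $o_P(n^{-1/4})$ and product-$o_P(n^{-1/2})$ conditions before invoking the DML central limit and variance-consistency theorems. The only structural difference is that the paper packages the $p>1$ verification of the DML assumptions as a standalone lemma and spells out the union bound giving a single $\Delta_n$ sequence across the $p+1$ nuisance estimates, details your outline subsumes.
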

The proof of Theorem \ref{t:1} is in the Supplementary Materials Section S6.

\subsection{Smoothness conditions}

To provide intuition on Assumption A6, the following explanation of Besov spaces is paraphrased from \cite{eberts2013optimal}. Denote the $\zeta$-th weak derivative $\partial^{(\zeta)}$ for a multi-index $\zeta = (\zeta_1,\zeta_2, ..., \zeta_d) \in \mathbb{N}^d$ with $|\zeta|=\sum_{i=1}^d\zeta_i$. With regard to a measure $\nu$, the Sobolev space $W_p^\alpha(\nu)$ is defined as: 
\begin{align*}
W_p^{\alpha}(\nu) :=  \{f \in L_p(\nu):\partial^{(\zeta)}f\in L_p(\nu) \text{ exists for all $\zeta \in \mathbb{N}^d$ with $|\zeta|<\alpha$}\}.    
\end{align*}
Loosely speaking, $W_p^{\alpha}(\nu)$ is the space of functions with $\alpha$ weak derivatives, which all must have finite $L_p(\nu)$ norm.
We refer to \cite{eberts2013optimal} for a full definition of Besov spaces $B_{p,q}^\alpha$, but Besov spaces provide a finer scale of smoothness than the integer-ordered Sobolev spaces, and Sobolev spaces are contained in the Besov spaces: \[W_p^\alpha(\mathbb{R}^d) \subset B_{p,q}^\alpha(\mathbb{R}^d)  \] for $\alpha \in \mathbb{N}, p\in(1, \infty), \max\{p, 2\} \leq q \leq \infty$. 

A stronger, but more interpretable alternative to the assumption that $h_0 \in B^{\alpha_Y}_{2s, \infty}$ and $m_{01}, ..., m_{0p} \in B^{\alpha_X}_{2s, \infty}$ in assumption A6 is that the $m_{0j}$ reside in the integer-order Sobolev space $W_{2s}^{\alpha_X'}$ and $h_0$ resides in the integer-order Sobolev space $W_{2s}^{\alpha_Y'}$, where $\alpha_X'=\lceil \alpha_X \rceil$ and $\alpha_Y' = \lceil \alpha_Y \rceil$ and $\lceil a \rceil$ indicates the lowest integer greater than $a$. In the common scenario $d=2$, the requirements on $\alpha_X$ and $\alpha_Y$ reduce to $\alpha_X > 1$ and $\alpha_Y>1$, which for integer-ordered Sobolev spaces, loosely means that both $h_0$ and $m_0$ have at least two partial derivatives, or are smoother than functions with only one partial derivative. 

\section{Additional algorithm}

Below is the algorithm for DSR estimation without cross-fitting. As in the main paper, the assumed model for $j = 1, ..., \ell$ is:
\begin{align}\label{eq:SDML_practical_model1}
    Y_i &= \bA_i^T\bbeta_0 + \bZ_i^T\boldsymbol{\theta}_{00} + g_0(\bS_i) + U_i\\
    A_{ij} &= \bZ_i^T\boldsymbol{\theta}_{0j} + m_{0j}(\bS_i) + V_{ij}, \label{eq:SDML_practical_model2}
\end{align}
If cross-fitting is not used, the Universal Kriging equations are:
\begin{align}\label{eq:univ_kriging_nosplit_eqn1}
    \widehat{g}_0(\bS) &= \widehat{\omega}^2_{0}C_{\widehat{\gamma}_{0},\widehat{\tau}_{0}}(\bS, \bS)\biggl(\widehat{\omega}^2_{0}C_{\widehat{\gamma}_{0},\widehat{\tau}_{0}}(\bS, \bS) + \widehat{\sigma}_{0}^2 \bI\biggr)^{-1}(\bY - \bA^T\widetilde{\bbeta}_{0} - \bZ^T \widehat{\boldsymbol{\theta}}_{0}) \\
     \widehat{m}_{0j}(\bS) &=\widehat{\omega}^2_{j}C_{\widehat{\gamma}_{j},\widehat{\tau}_{j}}(\bS, \bS)\biggl(\widehat{\omega}^2_{j}C_{\widehat{\gamma}_{j},\widehat{\tau}_{j}}(\bS, \bS) + \widehat{\sigma}_{j}^2 \bI\biggr)^{-1}(\bA_{ j} -\bZ^T \widehat{\boldsymbol{\theta}}_{j}),\label{eq:univ_kriging_nosplit_eqn2}
\end{align}
where notation is as in the main paper, except that the subscript $k$ indicating folds has been dropped as all parameter selections and predictions are obtained using the full dataset.

\begin{algorithm}[H]\caption{DSR estimation of $\bbeta_0$ without cross-fitting}\label{a:SDML_practical_nocross}
\begin{algorithmic}
    \Require Response vector $\bY \in \mathbb{R}^{n}$, location matrix $\bS \in \mathbb{R}^{n \times 2}$, treatment matrix $\bA \in \mathbb{R}^{n \times \ell}$, covariate matrix $\bZ \in \mathbb{R}^{n \times m}$.
    \Ensure Estimate $\widehat{\bbeta}_{DSR}$ of $\bbeta_0 \in \mathbb{R}^{\ell}$ and estimate $\widehat{Var}(\widehat{\bbeta}_{DSR})$ of $Var(\widehat{\bbeta}_{DSR}) \in \mathbb{R}^{\ell \times \ell}$
    \State Using $\bW$, obtain $\widetilde{\boldsymbol{\beta}}_{0},$ and for $j=0, ..., \ell$, $\widetilde{\boldsymbol{\theta}}_{0j}$,  $\widetilde{\gamma}_{j}, \widetilde{\tau}_{j}, \widetilde{\sigma}_{j}, \widetilde{\omega}_{j}$ by fitting Models (\ref{eq:SDML_practical_model1}) and (\ref{eq:SDML_practical_model2}) using \texttt{GpGp}.
    \State Obtain $\widehat{g}_0(\bS)$ by (\ref{eq:univ_kriging_nosplit_eqn1}) (approximated by \texttt{GpGp}).

    \For{$j=1, ..., \ell$}
    \State Obtain $\widehat{m}_{0j}(\bS)$ by (\ref{eq:univ_kriging_nosplit_eqn2}) (approximated by \texttt{GpGp}).
    \State $\widehat{\bA}_{\cdot, j} \leftarrow \bZ^T\widetilde{\boldsymbol{\theta}}_{0j} + \widehat{m}_{0j}(\bS) $

    \EndFor
    
    \State $\widehat{\bV} \leftarrow \bA - \widehat{\bA}$
    \State $\widehat{\bbeta}_{DSR} \leftarrow (\widehat{\bV}^T\bA)^{-1}\widehat{\bV}^T(\bY - \bZ^T {\widetilde{\boldsymbol{\theta}}}_{00} - \widehat{g}_0(\bS))$
    \State $\widehat{\bU} = \bY - \bA^T\widehat{\bbeta}_{DSR} - \bZ^T\widetilde{\boldsymbol{\theta}}_{00} - \widehat{g}_0(\bS)$
     \State $\widehat{Var}(\widehat{\bbeta}_{DSR}) \leftarrow (\widehat{\bV}^T\bA)^{-1} \left(\sum_{i=1}^n \widehat{U}_i^2\widehat{\bV}_i \widehat{\bV}_i^T\right)\left((\widehat{\bV}^T\bA)^{-1}\right)^T$
    \State Return $\widehat{\bbeta}_{DSR}, \widehat{Var}(\widehat{\bbeta}_{DSR})$
\end{algorithmic}
\end{algorithm}

\section{Additional simulation details}
Additional scenarios:
\begin{enumerate}
    \item Cubed confounder: $Y_i \sim N(\beta A_i + Z_i^3, \sigma^2_Y)$.
    \item Gamma errors in $\bY$: $Y_i = \beta A_i + Z_i + \phi_i, \phi_i =q[\Phi(\epsilon_i/\sqrt{3})] $, where $q$ is the quantile function for the Gamma$(1, 1/\sqrt{3})$ distribution and $\Phi$ is the standard normal CDF, and $\epsilon_i \sim N(0,\sigma_Y^2)$.
    \item ``East-west" heteroskedasticity: $Y_i = \beta A_i + Z_i + S_{1i}\epsilon_i$, where $S_{1i}$ is the first coordinate of $\bS_i$ and $\epsilon_i \sim N(0,\sigma_Y^2)$.
    \item ``Middle-out" heteroskedasticity: $Y_i = \beta A_i + \sqrt{\frac{\omega(S_{1i})}{3}}Z_i + \sqrt{1 - \omega(S_{1i})}\epsilon_i$, where $\omega(S_{1i}) = \Phi(\frac{S_{1i}-0.5}{0.1})$ and $\epsilon_i \sim N(0,\sigma_Y^2)$.
\end{enumerate}
The last three are borrowed from \cite{model_free}. Five further scenarios were considered.
\begin{enumerate}
    \item Higher variance in $\bA$: $\sigma^2_A=1$, causing less confounding bias. 
    \item Very rough processes:  $\boldsymbol{\Sigma}_A$ and $\boldsymbol{\Sigma}_Z$ were Mat\`ern correlation matrices with smoothness 0.5, equivalent to exponential covariance, making adjustment more challenging due to very rough sample paths. The range parameter for the covariance function was 0.114 to have similar practical spatial range as the other scenarios. 
    \item Gridded spatial locations: Theory requires random spatial locations, but this illustrates the method with (very regular) fixed spatial locations.
    \item Deterministic function of space, same for $\bA$ and $\bZ$: To avoid over-stating the effectiveness of DSR when the latent functions of space are generated and estimated using GPs, the data were generated using: $Y_i = \beta_0A_i + g_0(\bs_i) + \epsilon_{0i}$, and $A_i=m_0(\bs_i) + \epsilon_{1i}$, where $g_0(\bs_i) = m_0(\bs_i) = \cos(10s_{i1})\sin(10s_{i2})$, $\epsilon_{0i} \stackrel{i.i.d.}{\sim}N(0, 1^2)$, and $\epsilon_{1i} \stackrel{i.i.d.}{\sim}N(0, 0.1^2)$.
    \item Deterministic function of space, different for $\bA$ and $\bZ$: Similar to the previous scenario, but now $g_0(\bs_i) = m_0(\bs_i) + \sin(10s_{i1})\sin(10s_{i2})$ and $m_0$ is defined as in the previous scenario.
\end{enumerate}

Plots of smooth, rough, and very rough simulated spatial surfaces are below in Figures \ref{fig:smooth_surface}, \ref{fig:rough_surface}, and \ref{fig:very_rough_surface}. These are examples of surfaces drawn from the distributions used to generate observations of the treatment variable $\bA$ and the unobserved confounder $\bZ$ in the simulation study. Since these surfaces are drawn randomly in each iteration of the simulation, these are only representative of the level of smoothness in each type of distribution used, and are not actual simulated datasets used. The data in the following plots are placed on a regular grid for easier visualization of the smoothnesses, rather than having spatial locations drawn randomly as in most of the simulated scenarios.

\begin{figure}
    \centering
    \includegraphics[width=0.5\linewidth]{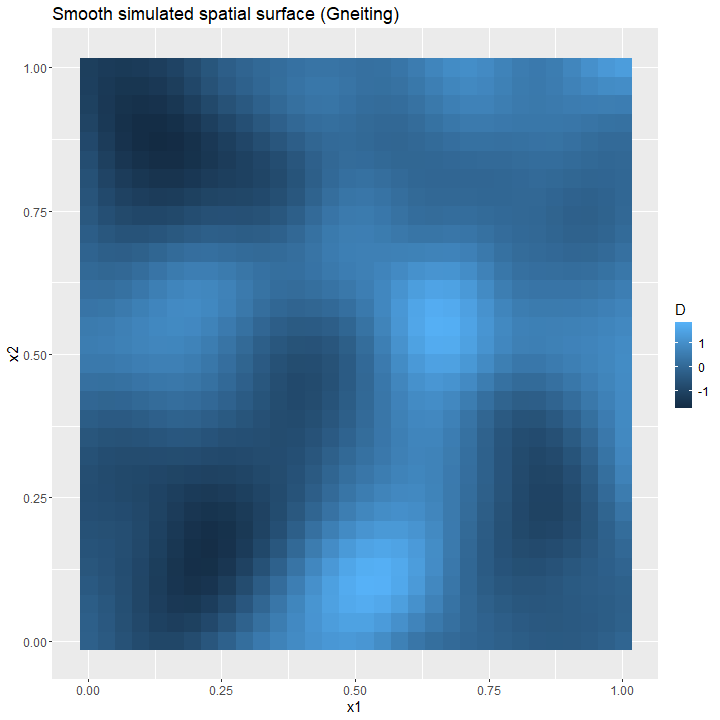}
    \caption{Smooth simulated spatial surface, drawn a multivariate normal distribution with Gneiting covariance, and range parameter 0.2, as used in the simulation study.}
    \label{fig:smooth_surface}
\end{figure}

\begin{figure}
    \centering
    \includegraphics[width=0.5\linewidth]{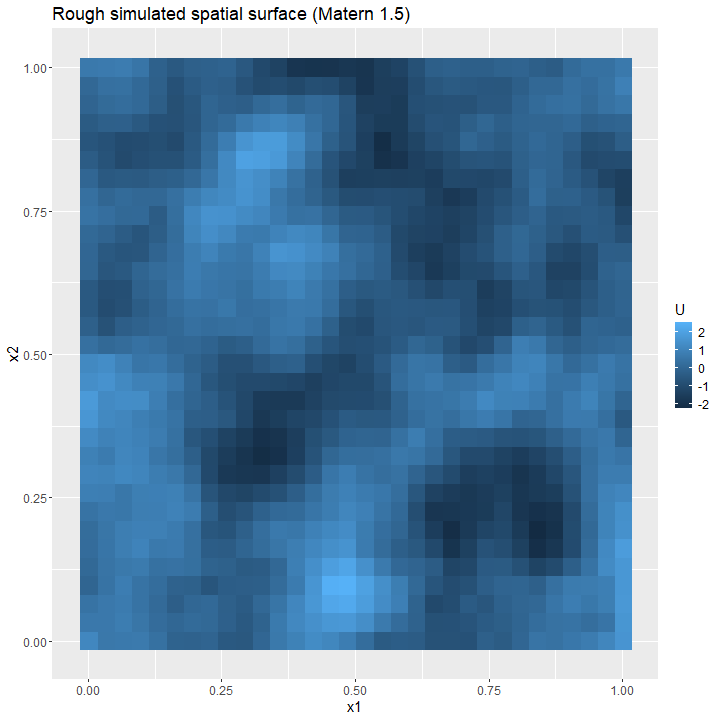}
    \caption{Rough simulated spatial surface, drawn a multivariate normal distribution with Mat\`ern covariance, smoothness parameter 1.5, and range parameter 0.072, as used in the simulation study.}
    \label{fig:rough_surface}
\end{figure}

\begin{figure}
    \centering
    \includegraphics[width=0.5\linewidth]{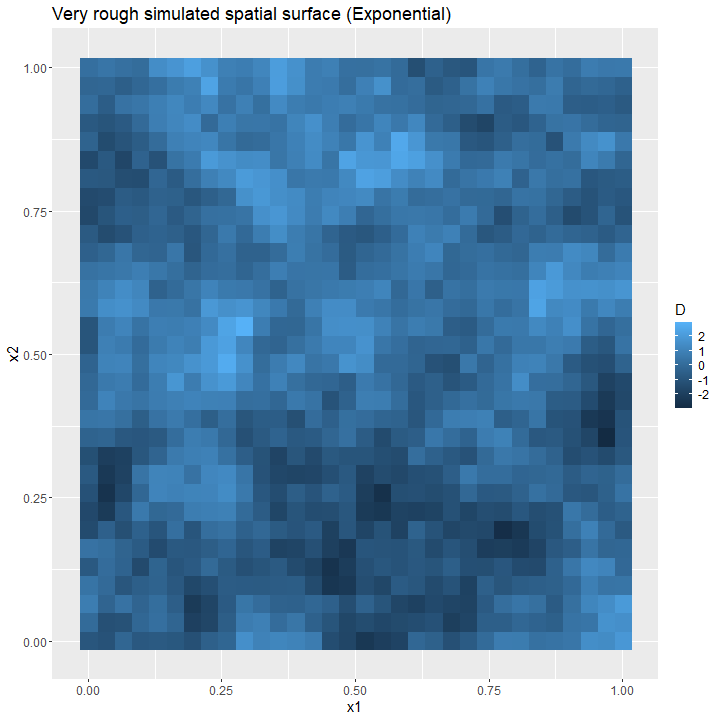}
    \caption{Very rough simulated spatial surface, drawn from a multivariate normal distribution with exponential covariance, and range parameter 0.114, as used in the simulation study.}
    \label{fig:very_rough_surface}
\end{figure}

In implementation of gSEM and Spatial+, no adjustment was made in the bootstrapping procedure for spatial correlation between observations, or the tendency of GAMs to under-smooth in bootstrap samples described in \cite{wood2017generalized}. Spline models, gSEM, and Spatial+ used 300 spline basis functions.

\section{Full simulation results}
\setcounter{table}{0}
\renewcommand{\thetable}{S\arabic{table}}
In the following tables, the following methods were compared:
\begin{itemize}
    \item OLS: ordinary least squares regression.
    \item LMM: Spatial linear mixed model, estimated using \texttt{GpGp} \citep{GpGp}.
    \item Spline (GCV): spline model estimated using \texttt{mgcv} \citep{wood2011gam}, with smoothing parameter selected by generalized cross-validation, to minimize out-of-sample prediction error.
    \item Spline (REML): spline model estimated using \texttt{mgcv}, smoothing parameter selected by restricted maximum likelihood (REML); \cite{wood2017generalized} states that this might reduce bias when the parametric component is collinear with smooth term.
    \item Spatial+: the method of \cite{dupont2022spatial+}.
    \item gSEM: the point-referenced version of \cite{thaden2018structural}, described in Algorithm \ref{a:gsem}.
    \item Shift (BART): the shift estimand implemented in \cite{gilbert}, using Bayesian Additive Regression Trees \citep{chipman2010bart}, or BART, to obtain the preliminary nonparametric estimates. BART estimates were obtained from the R package \texttt{dbarts} \cite{dbarts}. Results only presented in Table \ref{tab:sim_highvar} due to inability to obtain estimates in other scenarios. Variance estimates not obtained due to relatively high expense of bootstrapping.
    \item DSR (theory): the theoretical DSR estimator described in Algorithm \ref{a:SDML}.  
    \item DSR (theory, no crossfit): the theoretical DSR estimator described in Algorithm \ref{a:SDML} but without crossfitting.
    \item DSR: The DSR estimator described in Algorithm 1.
    \item DSR: (no crossfit): the DSR estimator without crossfitting described in Algorithm \ref{a:SDML_practical_nocross}.
    \item DSR (spline, no crossfit): the DSR estimator without crossfitting described in Algorithm \ref{a:SDML_practical_nocross} but with splines used instead of GP regression.
    \item DSR (theory, spline, no crossfit): the theoretical DSR estimator described in Algorithm \ref{a:SDML} but without crossfitting, and with splines used instead of GP regression. This is equivalent to the implementation of gSEM used, except that it has a closed-form variance estimate.
    \item DSR (theory, GpGp): the theoretical DSR estimator described in Algorithm \ref{a:SDML} but with GP regression implemented by the \texttt{GpGp} package \citep{GpGp}.
    \item DSR (Smooth covariance): The DSR estimator with crossfitting described in Algorithm 1, but with the smoothness term of the Mat\`ern covariance fixed at 4.5 for a smoother fitted function of space.
    \item DSR (Theory, smooth covariance): the theoretical DSR estimator described in Algorithm \ref{a:SDML} but with GP regression implemented by the \texttt{GpGp} package \citep{GpGp}, and with the smoothness term of the Mat\`ern covariance fixed at 4.5 for a smoother fitted function of space.
\end{itemize}
Note that for theoretical DSR, we use a grid of possible $\gamma$ values of size $\frac{1}{2n^{-1/2}}$ to improve finite-sample performance.
The following metrics were used to compare the methods:
\begin{itemize}
    \item Bias: average difference between estimates and $\beta_0$ over the Monte Carlo iterations. In all simulations $\beta_0=0.5$.
    \item Rel. Bias: relative bias, equal to bias divided by $\beta_0$. 
    \item MSE: mean squared error of estimates over the Monte Carlo iterations.
    \item 95\% CI Length: mean length of 95\% confidence intervals.
    \item 95\% CI CVG: coverage, i.e. proportion of Monte Carlo iterations in which the 95\% confidence interval included $\beta_0$.
    \item Power: Proportion of Monte Carlo iterations in which the 95\% confidence interval did not include 0.
\end{itemize}

\renewcommand{\arraystretch}{0.7}
\begin{table}[H]
    \centering
    \small
        \caption{Simulation results for rough $\bU$ and rough $\bA$. Metrics are bias, relative bias (bias divided by $\beta_0$), mean squared error (MSE), confidence interval (CI) length, coverage (CVG), and power. CI length, coverage, and power are computed with respect to 95\% confidence intervals.}
    \begin{tabular}[t]{lrrrrrrr}
\hline
  & Bias & Rel. Bias & MSE & CI Length & CVG & Power\\
OLS & 0.499 & 0.997 & 0.273 & 0.170 & 0.005 & 1.000\\
LMM & 0.474 & 0.947 & 0.232 & 0.329 & 0.000 & 1.000\\
Spline (GCV) & 0.469 & 0.938 & 0.228 & 0.358 & 0.000 & 1.000\\
Spline (REML) & 0.474 & 0.948 & 0.233 & 0.332 & 0.000 & 1.000\\
Spatial+ & 0.483 & 0.966 & 0.303 & 1.256 & 0.700 & 0.897\\
gSEM & 0.472 & 0.944 & 0.291 & 1.248 & 0.703 & 0.905\\
DSR (theory) & 0.452 & 0.905 & 0.215 & 0.425 & 0.005 & 1.000\\
DSR (theory, GpGp) & 0.301 & 0.602 & 0.141 & 0.786 & 0.625 & 0.965\\
DSR (theory, no crossfit) & 0.482 & 0.965 & 0.251 & 0.466 & 0.030 & 1.000\\
DSR & 0.321 & 0.642 & 0.152 & 0.791 & 0.623 & 0.940\\
DSR (no crossfit) & 0.291 & 0.582 & 0.132 & 0.812 & 0.655 & 0.940\\
DSR (spline, no crossfit) & 0.332 & 0.663 & 0.147 & 0.708 & 0.527 & 0.985\\
DSR (theory, spline, no crossfit) & 0.472 & 0.944 & 0.291 & 0.942 & 0.478 & 0.958\\
DSR (theory, GpGp, smooth) & \textbf{0.284} & 0.568 & \textbf{0.128} & 0.773 & 0.650 & 0.965\\
DSR (smooth) & 0.320 & 0.639 & 0.148 & 0.801 & 0.640 & 0.953\\

\hline

\end{tabular}

    \label{tab:sim_roughU_roughA}
\end{table}

\normalsize

\begin{table}[H]
    \centering
    \small
    \caption{Simulation results for rough $\bU$ and smooth $\bA$. Metrics are bias, relative bias (bias divided by $\beta_0$), mean squared error (MSE), confidence interval (CI) length, coverage (CVG), and power. CI length, coverage, and power are computed with respect to 95\% confidence intervals.}
    \begin{tabular}[t]{lrrrrrrr}
\hline
  & Bias & Rel. Bias & MSE & CI Length & CVG & Power\\
OLS & 0.432 & 0.863 & 0.222 & 0.179 & 0.038 & 1.000\\
LMM & 0.441 & 0.881 & 0.213 & 0.505 & 0.090 & 1.000\\
Spline (GCV) & 0.463 & 0.925 & 0.239 & 0.700 & 0.215 & 0.998\\
Spline (REML) & 0.484 & 0.969 & 0.257 & 0.624 & 0.128 & 1.000\\
Spatial+ & 0.237 & 0.475 & 0.184 & 1.634 & 0.950 & 0.415\\
gSEM & 0.211 & 0.422 & 0.163 & 1.559 & 0.948 & 0.418\\
DSR (theory) & 0.478 & 0.955 & 0.265 & 0.824 & 0.370 & 0.998\\
DSR (theory, GpGp) & 0.162 & 0.325 & 0.140 & 1.215 & 0.877 & 0.570\\
DSR (theory, no crossfit) & 0.418 & 0.836 & 0.218 & 0.897 & 0.573 & 0.983\\
DSR & \textbf{0.141} & 0.281 & 0.150 & 1.450 & 0.922 & 0.448\\
DSR (no crossfit) & 0.160 & 0.320 & \textbf{0.117} & 1.216 & 0.895 & 0.568\\
DSR (spline, no crossfit) & 0.252 & 0.504 & 0.123 & 0.819 & 0.720 & 0.925\\
DSR (theory, spline, no crossfit) & 0.211 & 0.422 & 0.163 & 1.309 & 0.875 & 0.557\\
DSR (theory, GpGp, smooth) & 0.165 & 0.329 & 0.137 & 1.215 & 0.900 & 0.578\\
DSR (smooth) & 0.152 & 0.305 & 0.150 & 1.461 & 0.917 & 0.465\\

\hline

\end{tabular}

\end{table}

\begin{table}[H]
    \centering
    \small
    \caption{Simulation results for smooth $\bU$ and rough $\bA$. Metrics are bias, relative bias (bias divided by $\beta_0$), mean squared error (MSE), confidence interval (CI) length, coverage (CVG), and power. CI length, coverage, and power are computed with respect to 95\% confidence intervals.}
    \begin{tabular}[t]{lrrrrrrr}
\hline
  & Bias & Rel. Bias & MSE & CI Length & CVG & Power\\
OLS & 0.402 & 0.804 & 0.195 & 0.172 & 0.040 & 1.000\\
LMM & 0.179 & 0.358 & 0.038 & 0.295 & 0.340 & 1.000\\
Spline (GCV) & 0.181 & 0.362 & 0.039 & 0.289 & 0.315 & 1.000\\
Spline (REML) & 0.177 & 0.353 & 0.037 & 0.295 & 0.350 & 1.000\\
Spatial+ & 0.173 & 0.345 & 0.097 & 1.245 & 0.965 & 0.595\\
gSEM & 0.169 & 0.338 & 0.095 & 1.238 & 0.973 & 0.603\\
DSR (theory) & 0.182 & 0.365 & 0.042 & 0.409 & 0.610 & 1.000\\
DSR (theory, GpGp) & 0.027 & 0.054 & 0.042 & 0.756 & 0.935 & 0.762\\
DSR (theory, no crossfit) & 0.236 & 0.472 & 0.072 & 0.466 & 0.488 & 1.000\\
DSR & 0.033 & 0.066 & 0.041 & 0.775 & 0.968 & 0.780\\
DSR (no crossfit) & 0.041 & 0.082 & 0.048 & 0.830 & 0.950 & 0.713\\
DSR (spline, no crossfit) & 0.038 & 0.077 & \textbf{0.036} & 0.713 & 0.950 & 0.830\\
DSR (theory, spline, no crossfit) & 0.169 & 0.338 & 0.095 & 0.958 & 0.885 & 0.770\\
DSR (theory, GpGp, smooth) & \textbf{0.004} & 0.007 & 0.042 & 0.743 & 0.940 & 0.713\\
DSR (smooth) & 0.023 & 0.046 & 0.045 & 0.772 & 0.938 & 0.745\\

\hline

\end{tabular}

    \label{tab:sim_smoothU_roughA}
\end{table}

\normalsize

\begin{table}[H]
    \centering
    \small
    \caption{Simulation results for smooth $\bU$ and smooth $\bA$. Metrics are bias, relative bias (bias divided by $\beta_0$), mean squared error (MSE), confidence interval (CI) length, coverage (CVG), and power. CI length, coverage, and power are computed with respect to 95\% confidence intervals.}
    \begin{tabular}[t]{lrrrrrrr}
\hline
  & Bias & Rel. Bias & MSE & CI Length & CVG & Power\\
OLS & 0.498 & 0.997 & 0.291 & 0.174 & 0.028 & 1.000\\
LMM & 0.434 & 0.869 & 0.206 & 0.461 & 0.065 & 1.000\\
Spline (GCV) & 0.435 & 0.870 & 0.208 & 0.463 & 0.075 & 1.000\\
Spline (REML) & 0.431 & 0.862 & 0.204 & 0.477 & 0.088 & 1.000\\
Spatial+ & 0.210 & 0.419 & 0.163 & 1.619 & 0.950 & 0.398\\
gSEM & 0.194 & 0.388 & 0.151 & 1.552 & 0.950 & 0.412\\
DSR (theory) & 0.431 & 0.862 & 0.216 & 0.790 & 0.448 & 1.000\\
DSR (theory, GpGp) & 0.046 & 0.093 & 0.098 & 1.174 & 0.938 & 0.462\\
DSR (theory, no crossfit) & 0.386 & 0.772 & 0.189 & 0.884 & 0.618 & 0.983\\
DSR & 0.018 & 0.036 & 0.121 & 1.458 & 0.963 & 0.345\\
DSR (no crossfit) & 0.047 & 0.094 & 0.093 & 1.295 & 0.968 & 0.415\\
DSR (spline, no crossfit) & 0.180 & 0.359 & \textbf{0.080} & 0.840 & 0.828 & 0.858\\
DSR (theory, spline, no crossfit) & 0.194 & 0.388 & 0.151 & 1.320 & 0.887 & 0.545\\
DSR (theory, GpGp, smooth) & 0.048 & 0.096 & 0.095 & 1.175 & 0.930 & 0.452\\
DSR (smooth) & \textbf{0.016} & 0.032 & 0.114 & 1.448 & 0.970 & 0.332\\

\hline

\end{tabular}

\end{table}

\begin{table}[H]
    \centering
    \small
    \caption{Simulation results for middle-out heteroskedastic errors. Metrics are bias, relative bias (bias divided by $\beta_0$), mean squared error (MSE), confidence interval (CI) length, coverage (CVG), and power. CI length, coverage, and power are computed with respect to 95\% confidence intervals.}
    \begin{tabular}[t]{lrrrrrrr}
\hline
  & Bias & Rel. Bias & MSE & CI Length & CVG & Power\\
OLS & 0.165 & 0.329 & 0.036 & 0.106 & 0.102 & 1.000\\
LMM & 0.151 & 0.301 & \textbf{0.027} & 0.236 & 0.310 & 1.000\\
Spline (GCV) & 0.151 & 0.302 & 0.028 & 0.266 & 0.428 & 1.000\\
Spline (REML) & 0.152 & 0.304 & 0.028 & 0.252 & 0.380 & 1.000\\
Spatial+ & 0.162 & 0.325 & 0.087 & 1.139 & 0.948 & 0.632\\
gSEM & 0.149 & 0.298 & 0.081 & 1.089 & 0.950 & 0.637\\
DSR (theory) & 0.220 & 0.441 & 0.063 & 0.550 & 0.667 & 1.000\\
DSR (theory, GpGp) & 0.020 & 0.039 & 0.049 & 0.824 & 0.945 & 0.680\\
DSR (theory, no crossfit) & 0.210 & 0.420 & 0.067 & 0.621 & 0.738 & 0.995\\
DSR & 0.010 & 0.020 & 0.061 & 0.970 & 0.975 & 0.565\\
DSR (no crossfit) & 0.022 & 0.045 & 0.044 & 0.847 & 0.958 & 0.665\\
DSR (spline, no crossfit) & 0.062 & 0.123 & \textbf{0.027} & 0.585 & 0.938 & 0.940\\
DSR (theory, spline, no crossfit) & 0.149 & 0.298 & 0.081 & 0.934 & 0.900 & 0.760\\
DSR (theory, GpGp, smooth) & 0.020 & 0.041 & 0.049 & 0.823 & 0.940 & 0.680\\
DSR (smooth) & \textbf{0.007} & 0.014 & 0.063 & 0.980 & 0.965 & 0.555\\

\hline

\end{tabular}

    \label{tab:sim_middleout}
\end{table}

\begin{table}[H]
    \centering
    \small
    \caption{Simulation results for cubed confounder. Metrics are bias, relative bias (bias divided by $\beta_0$), mean squared error (MSE), confidence interval (CI) length, coverage (CVG), and power. CI length, coverage, and power are computed with respect to 95\% confidence intervals.}
    \begin{tabular}[t]{lrrrrrrr}
\hline
  & Bias & Rel. Bias & MSE & CI Length & CVG & Power\\
OLS & 1.449 & 2.899 & 3.164 & 0.426 & 0.025 & 1.000\\
LMM & 0.465 & 0.929 & 0.292 & 1.029 & 0.557 & 0.925\\
Spline (GCV) & 0.558 & 1.115 & 0.398 & 1.021 & 0.455 & 0.970\\
Spline (REML) & 0.466 & 0.931 & 0.294 & 1.069 & 0.598 & 0.927\\
Spatial+ & 0.114 & 0.228 & 0.141 & 1.638 & 0.980 & 0.285\\
gSEM & 0.077 & 0.155 & 0.119 & 1.552 & 0.978 & 0.282\\
DSR (theory) & 0.935 & 1.869 & 1.175 & 1.163 & 0.192 & 0.990\\
DSR (theory, GpGp) & 0.053 & 0.107 & 0.136 & 1.260 & 0.915 & 0.402\\
DSR (theory, no crossfit) & 0.749 & 1.498 & 0.777 & 1.163 & 0.338 & 0.983\\
DSR & 0.009 & 0.018 & 0.150 & 1.602 & 0.953 & 0.280\\
DSR (no crossfit) & 0.089 & 0.179 & \textbf{0.112} & 1.203 & 0.900 & 0.488\\
DSR (spline, no crossfit) & 0.263 & 0.526 & 0.158 & 0.798 & 0.680 & 0.882\\
DSR (theory, spline, no crossfit) & 0.077 & 0.155 & 0.119 & 1.254 & 0.925 & 0.448\\
DSR (theory, GpGp, smooth) & 0.059 & 0.117 & 0.134 & 1.260 & 0.912 & 0.438\\
DSR (smooth) & \textbf{0.001} & 0.002 & 0.159 & 1.636 & 0.955 & 0.288\\

\hline

\end{tabular}

    \label{tab:sim_cubed}
\end{table}

\begin{table}[H]
    \centering
    \small
    \caption{Simulation results for gamma-distributed errors. Metrics are bias, relative bias (bias divided by $\beta_0$), mean squared error (MSE), confidence interval (CI) length, coverage (CVG), and power. CI length, coverage, and power are computed with respect to 95\% confidence intervals.}
    \begin{tabular}[t]{lrrrrrrr}
\hline
  & Bias & Rel. Bias & MSE & CI Length & CVG & Power\\
OLS & 0.416 & 0.832 & 0.210 & 0.166 & 0.055 & 1.000\\
LMM & 0.367 & 0.733 & 0.151 & 0.429 & 0.108 & 1.000\\
Spline (GCV) & 0.364 & 0.728 & 0.151 & 0.452 & 0.148 & 1.000\\
Spline (REML) & 0.364 & 0.729 & 0.151 & 0.450 & 0.142 & 1.000\\
Spatial+ & 0.199 & 0.398 & 0.158 & 1.619 & 0.963 & 0.385\\
gSEM & 0.184 & 0.368 & 0.148 & 1.552 & 0.960 & 0.400\\
DSR (theory) & 0.390 & 0.780 & 0.183 & 0.790 & 0.500 & 0.995\\
DSR (theory, GpGp) & 0.034 & 0.068 & 0.094 & 1.171 & 0.953 & 0.448\\
DSR (theory, no crossfit) & 0.360 & 0.720 & 0.171 & 0.888 & 0.665 & 0.975\\
DSR & 0.008 & 0.016 & 0.119 & 1.435 & 0.965 & 0.335\\
DSR (no crossfit) & 0.038 & 0.075 & 0.094 & 1.278 & 0.965 & 0.410\\
DSR (spline, no crossfit) & 0.150 & 0.299 & \textbf{0.072} & 0.834 & 0.858 & 0.835\\
DSR (theory, spline, no crossfit) & 0.184 & 0.368 & 0.148 & 1.323 & 0.890 & 0.530\\
DSR (theory, GpGp, smooth) & 0.035 & 0.070 & 0.093 & 1.173 & 0.943 & 0.450\\
DSR (smooth) & \textbf{-0.002} & -0.004 & 0.118 & 1.436 & 0.973 & 0.340\\

\hline

\end{tabular}

    \label{tab:sim_gamma}
\end{table}

\begin{table}[H]
    \centering
    \small
    \caption{Simulation results for east-west heteroskedastic errors. Metrics are bias, relative bias (bias divided by $\beta_0$), mean squared error (MSE), confidence interval (CI) length, coverage (CVG), and power. CI length, coverage, and power are computed with respect to 95\% confidence intervals.}
    \begin{tabular}[t]{lrrrrrrr}
\hline
  & Bias & Rel. Bias & MSE & CI Length & CVG & Power\\
OLS & 0.250 & 0.499 & 0.078 & 0.150 & 0.072 & 1.000\\
LMM & 0.230 & 0.461 & 0.062 & 0.331 & 0.230 & 1.000\\
Spline (GCV) & 0.230 & 0.461 & 0.063 & 0.369 & 0.320 & 1.000\\
Spline (REML) & 0.232 & 0.464 & 0.063 & 0.351 & 0.282 & 1.000\\
Spatial+ & 0.195 & 0.389 & 0.156 & 1.618 & 0.960 & 0.382\\
gSEM & 0.182 & 0.365 & 0.147 & 1.552 & 0.948 & 0.388\\
DSR (theory) & 0.350 & 0.700 & 0.151 & 0.783 & 0.603 & 0.993\\
DSR (theory, GpGp) & 0.026 & 0.052 & 0.094 & 1.166 & 0.938 & 0.445\\
DSR (theory, no crossfit) & 0.331 & 0.662 & 0.156 & 0.887 & 0.703 & 0.970\\
DSR & \textbf{0.006} & 0.013 & 0.109 & 1.368 & 0.968 & 0.352\\
DSR (no crossfit) & 0.024 & 0.048 & 0.088 & 1.232 & 0.965 & 0.420\\
DSR (spline, no crossfit) & 0.092 & 0.184 & \textbf{0.053} & 0.832 & 0.907 & 0.780\\
DSR (theory, spline, no crossfit) & 0.182 & 0.365 & 0.147 & 1.333 & 0.895 & 0.520\\
DSR (theory, GpGp, smooth) & 0.023 & 0.046 & 0.089 & 1.166 & 0.960 & 0.432\\
DSR (smooth) & 0.007 & 0.013 & 0.113 & 1.395 & 0.973 & 0.330\\

\hline

\end{tabular}

    \label{tab:sim_eastwest}
\end{table}

\begin{table}[H]
    \centering
    \small
    \caption{Simulation results for $\sigma_A^2 = 1$.  Metrics are bias, relative bias (bias divided by $\beta_0$), mean squared error (MSE), confidence interval (CI) length, coverage (CVG), and power. CI length, coverage, and power are computed with respect to 95\% confidence intervals. Note that Shift (BART) results are included, without variance estimates.}
    \begin{tabular}[t]{lrrrrrrr}
\hline
  & Bias & Rel. Bias & MSE & CI Length & CVG & Power\\
\hline
OLS & 0.232 & 0.464 & 0.065 & 0.121 & 0.070 & 1\\
LMM & 0.025 & 0.050 & 0.002 & 0.124 & 0.855 & 1\\
Spline (GCV) & 0.027 & 0.054 & 0.002 & 0.124 & 0.858 & 1\\
Spline (REML) & 0.024 & 0.049 & 0.002 & 0.124 & 0.870 & 1\\
Spatial+ & 0.020 & 0.041 & 0.001 & 0.161 & 0.978 & 1\\
gSEM & \textbf{0.003} & 0.005 & 0.001 & 0.165 & 0.988 & 1\\
DSR (theory) & 0.012 & 0.023 & 0.001 & 0.124 & 0.930 & 1\\
DSR (theory, GpGp) & 0.008 & 0.015 & 0.001 & 0.124 & 0.940 & 1\\
DSR (theory, no crossfit) & 0.007 & 0.014 & 0.001 & 0.123 & 0.938 & 1\\
DSR & 0.009 & 0.018 & 0.001 & 0.123 & 0.940 & 1\\
DSR (no crossfit) & 0.010 & 0.020 & 0.001 & 0.115 & 0.922 & 1\\
DSR (spline, no crossfit) & 0.009 & 0.018 & 0.001 & 0.115 & 0.920 & 1\\
DSR (theory, spline, no crossfit) & 0.003 & 0.005 & \textbf{0.001} & 0.123 & 0.945 & 1\\
DSR (theory, GpGp, smooth) & 0.008 & 0.016 & 0.001 & 0.124 & 0.948 & 1\\
DSR (smooth) & 0.010 & 0.019 & 0.001 & 0.123 & 0.948 & 1\\

\hline

\end{tabular}

    \label{tab:sim_highvar}
\end{table}

\begin{table}[H]
    \centering
    \small
    \caption{Simulation results for very rough, exponential spatial processes generating $\bU$ and $\bA$. Metrics are bias, relative bias (bias divided by $\beta_0$), mean squared error (MSE), confidence interval (CI) length, coverage (CVG), and power. CI length, coverage, and power are computed with respect to 95\% confidence intervals.}
    \begin{tabular}[t]{lrrrrrrr}
\hline
  & Bias & Rel. Bias & MSE & CI Length & CVG & Power\\
OLS & 0.501 & 1.003 & 0.265 & 0.168 & 0.000 & 1\\
LMM & 0.491 & 0.981 & 0.245 & 0.237 & 0.000 & 1\\
Spline (GCV) & 0.489 & 0.978 & 0.243 & 0.253 & 0.000 & 1\\
Spline (REML) & 0.492 & 0.984 & 0.246 & 0.234 & 0.000 & 1\\
Spatial+ & 0.608 & 1.216 & 0.380 & 0.483 & 0.000 & 1\\
gSEM & 0.582 & 1.165 & 0.350 & 0.481 & 0.002 & 1\\
DSR (theory) & 0.508 & 1.017 & 0.263 & 0.274 & 0.000 & 1\\
DSR (theory, GpGp) & 0.473 & 0.947 & 0.233 & 0.328 & 0.002 & 1\\
DSR (theory, no crossfit) & 0.554 & 1.108 & 0.319 & 0.297 & 0.000 & 1\\
DSR & 0.474 & 0.947 & 0.232 & 0.324 & 0.000 & 1\\
DSR (no crossfit) & 0.478 & 0.956 & 0.236 & 0.282 & 0.002 & 1\\
DSR (spline, no crossfit) & 0.478 & 0.956 & 0.235 & 0.299 & 0.000 & 1\\
DSR (theory, spline, no crossfit) & 0.582 & 1.165 & 0.350 & 0.359 & 0.000 & 1\\
DSR (theory, GpGp, smooth) & \textbf{0.446} & 0.893 & \textbf{0.207} & 0.321 & 0.000 & 1\\
DSR (smooth) & 0.478 & 0.957 & 0.236 & 0.322 & 0.000 & 1\\

\hline

\end{tabular}

    \label{tab:sim_exponential}
\end{table}

\begin{table}[H]
    \centering
    \small
    \caption{Simulation results for spatial locations located on a regular grid. Metrics are bias, relative bias (bias divided by $\beta_0$), mean squared error (MSE), confidence interval (CI) length, coverage (CVG), and power. CI length, coverage, and power are computed with respect to 95\% confidence intervals.}
    \begin{tabular}[t]{lrrrrrrr}
\hline
  & Bias & Rel. Bias & MSE & CI Length & CVG & Power\\
OLS & 0.488 & 0.975 & 0.277 & 0.171 & 0.030 & 1.000\\
LMM & 0.431 & 0.863 & 0.205 & 0.442 & 0.075 & 1.000\\
Spline (GCV) & 0.430 & 0.859 & 0.205 & 0.444 & 0.085 & 1.000\\
Spline (REML) & 0.428 & 0.855 & 0.203 & 0.455 & 0.095 & 1.000\\
Spatial+ & 0.215 & 0.430 & 0.167 & 1.596 & 0.945 & 0.390\\
gSEM & 0.199 & 0.398 & 0.155 & 1.529 & 0.940 & 0.415\\
DSR (theory) & 0.435 & 0.869 & 0.221 & 0.767 & 0.378 & 1.000\\
DSR (theory, GpGp) & 0.046 & 0.092 & 0.095 & 1.152 & 0.943 & 0.455\\
DSR (theory, no crossfit) & 0.407 & 0.815 & 0.212 & 0.866 & 0.560 & 0.993\\
DSR & \textbf{0.006} & 0.013 & 0.116 & 1.446 & 0.965 & 0.325\\
DSR (no crossfit) & 0.025 & 0.050 & 0.101 & 1.373 & 0.968 & 0.328\\
DSR (spline, no crossfit) & 0.180 & 0.360 & \textbf{0.081} & 0.817 & 0.828 & 0.853\\
DSR (theory, spline, no crossfit) & 0.199 & 0.398 & 0.155 & 1.300 & 0.887 & 0.550\\
DSR (theory, GpGp, smooth) & 0.057 & 0.114 & 0.098 & 1.153 & 0.925 & 0.460\\
DSR (smooth) & 0.013 & 0.027 & 0.116 & 1.454 & 0.963 & 0.318\\

\hline

\end{tabular}

    \label{tab:sim_grid}
\end{table}

\begin{table}[H]
    \centering
    \small
    \caption{Simulation results for deterministic latent functions of space $g_0$ and $m_0$ such that $g_0=m_0$. Metrics are bias, relative bias (bias divided by $\beta_0$), mean squared error (MSE), confidence interval (CI) length, coverage (CVG), and power. CI length, coverage, and power are computed with respect to 95\% confidence intervals.}
    \begin{tabular}[t]{lrrrrrrr}
\hline
  & Bias & Rel. Bias & MSE & CI Length & CVG & Power\\
OLS & 0.959 & 1.919 & 0.924 & 0.244 & 0.000 & 1.000\\
LMM & 0.957 & 1.913 & 0.919 & 0.252 & 0.000 & 1.000\\
Spline (GCV) & 0.956 & 1.913 & 0.919 & 0.252 & 0.000 & 1.000\\
Spline (REML) & 0.958 & 1.917 & 0.922 & 0.247 & 0.000 & 1.000\\
Spatial+ & 0.227 & 0.454 & 0.170 & 1.600 & 0.953 & 0.428\\
gSEM & 0.208 & 0.416 & 0.156 & 1.521 & 0.938 & 0.450\\
DSR (theory) & 0.576 & 1.152 & 0.388 & 0.997 & 0.350 & 0.990\\
DSR (theory, GpGp) & 0.045 & 0.090 & 0.109 & 1.209 & 0.920 & 0.458\\
DSR (theory, no crossfit) & 0.467 & 0.935 & 0.300 & 1.087 & 0.588 & 0.932\\
DSR & 0.141 & 0.281 & \textbf{0.102} & 1.227 & 0.925 & 0.568\\
DSR (no crossfit) & 0.301 & 0.602 & 0.159 & 0.967 & 0.745 & 0.858\\
DSR (spline, no crossfit) & 0.371 & 0.743 & 0.181 & 0.874 & 0.603 & 0.963\\
DSR (theory, spline, no crossfit) & 0.208 & 0.416 & 0.156 & 1.304 & 0.892 & 0.580\\
DSR (theory, GpGp, smooth) & \textbf{0.051} & 0.102 & 0.106 & 1.210 & 0.935 & 0.448\\
DSR (smooth) & 0.146 & 0.292 & \textbf{0.102} & 1.225 & 0.938 & 0.580\\

\hline

\end{tabular}

    \label{tab:sim_determ1}
\end{table}

\begin{table}[H]
    \centering
    \small
    \caption{Simulation results for deterministic latent functions of space $g_0$ and $m_0$ such that $g_0\neq m_0$. Metrics are bias, relative bias (bias divided by $\beta_0$), mean squared error (MSE), confidence interval (CI) length, coverage (CVG), and power. CI length, coverage, and power are computed with respect to 95\% confidence intervals.}
    \begin{tabular}[t]{lrrrrrrr}
\hline
  & Bias & Rel. Bias & MSE & CI Length & CVG & Power\\
OLS & 0.987 & 1.974 & 0.979 & 0.271 & 0.000 & 1.000\\
LMM & 0.873 & 1.746 & 0.769 & 0.511 & 0.000 & 1.000\\
Spline (GCV) & 0.844 & 1.688 & 0.722 & 0.541 & 0.000 & 1.000\\
Spline (REML) & 0.849 & 1.698 & 0.729 & 0.538 & 0.000 & 1.000\\
Spatial+ & 0.195 & 0.390 & 0.156 & 1.600 & 0.965 & 0.395\\
gSEM & 0.176 & 0.352 & 0.143 & 1.521 & 0.960 & 0.418\\
DSR (theory) & 0.529 & 1.058 & 0.334 & 1.003 & 0.430 & 0.985\\
DSR (theory, GpGp) & 0.037 & 0.074 & 0.104 & 1.213 & 0.945 & 0.408\\
DSR (theory, no crossfit) & 0.394 & 0.787 & 0.226 & 1.087 & 0.713 & 0.910\\
DSR & 0.085 & 0.171 & \textbf{0.099} & 1.254 & 0.945 & 0.458\\
DSR (no crossfit) & 0.238 & 0.477 & 0.120 & 0.993 & 0.812 & 0.818\\
DSR (spline, no crossfit) & 0.333 & 0.666 & 0.161 & 0.836 & 0.618 & 0.958\\
DSR (theory, spline, no crossfit) & 0.176 & 0.352 & 0.143 & 1.298 & 0.892 & 0.542\\
DSR (theory, GpGp, smooth) & \textbf{0.043} & 0.087 & 0.104 & 1.213 & 0.930 & 0.425\\
DSR (smooth) & 0.087 & 0.173 & 0.101 & 1.252 & 0.948 & 0.465\\

\hline

\end{tabular}

    \label{tab:sim_determ2}
\end{table}

\normalsize

\section{Proof of Theorem 1}

Theorem 1 follows from an extension of Theorem 4.1 from \citep{dml}, and Theorems 3.3 and 3.6 from \citep{eberts2013optimal}. This section extends Theorem 4.1 from \citep{dml} to the case of a vector treatment variable, and then verifies that Assumptions A1-A6 satisfy the necessary conditions to apply these results.

\subsection{Proof of DML with partially linear model}

In this section we extend Theorem 4.1 from \citep{dml}, which analyzes the partially linear model with a scalar treatment, to a vector treatment. The extension essentially follows the proof of Theorem 4.1 from \citep{dml} with slight changes.

We use ``DML2", which is Definition 3.2 from \cite{dml}, where rather than aggregating K different estimates from K different folds, cross-fitting on the K folds is performed followed by estimation of $\bbeta_0$ using the combined cross-fitted estimates.

The assumed model is:
\begin{align}
\begin{split}\label{eqn:appendix_mod}
    \bY &= \bX\bbeta + g_0(\bS) + \bU \\
    \bX_j &= m_{0j}(\bS) + \bV_j
    \end{split}
\end{align}
with notation and definitions as in the main paper. Nuisance parameters and estimates $\eta$ are assumed to be in $T$, a convex subset of some normed vector space.

The ``practical" DSR estimator uses the score function:
\begin{align}\label{eqn:score1}
    \psi(\bW; \bbeta, \eta) := \{ Y - g(\bS) - \bX^T\bbeta\}(\bX - m(\bS))
\end{align}
However, for theory, we focus on the score function:
\begin{align}\label{eqn:score2}
    \psi(\bW; \bbeta, \eta) := \{ Y - h(\bS) - (\bX - m(\bS))^T\bbeta\}(\bX - m(\bS))
\end{align}

Assumption \ref{assn:plr} consists of those of Assumption 4.1 from \cite{dml} but expanded to encompass the case $p>1$, the length of $\bbeta_0$, and with some other slight changes for theoretical convenience. Let $\{\delta_n\}$ and $\{\Delta_n\}$ be sequences of positive constants converging to 0. Let $c, C, $ and $q$ be fixed strictly positive constants such that $q>4$, and let $K\geq 2$ be a fixed integer. For any $\eta=(\ell_1, \ell_2, ..., \ell_m)$ for any positive integer $m$ such that $\ell_1$, ..., $\ell_m$ are functions mapping $\mathcal{S}$ to $\mathbb{R}$, denote $\|\eta\|_{P,q} = \max_{1, ..., m}\{\|\ell_1\|_{P,q}, ..., \|\ell_m\|_{P,q}\}$.
\begin{assumption}[Regularity Conditions for partially linear regression model]\label{assn:plr}
    Let $\mathcal{P}$ be the collection of probability laws $P$ for $\bW=(Y, \bX, \bS)$ such that
    \begin{enumerate}
        \item[a)] Model (\ref{eqn:appendix_mod}) holds,
        \item[b)] $\|V_1\|_{P,q} , ... , \|V_p\|_{P,q} \leq C$ and $\|\bbeta_0\|_{\infty} \leq C$,
        \item[c)] The eigenvalues of the matrix $E[U^2\bV\bV^T]$ are greater than or equal to $ c^2$ and the matrix $E[\bV\bV^T]$ has singular values at least $ c$ and no greater than $C$,
        \item[d)] $\| E[U^2|\bS]\|_{P, \infty} \leq C$ and $\| E[V_j^2|\bS]\|_{P, \infty} \leq C$ for $j=1,...,p$,
        \item[e)] Given a random subset $I$ of $[n]$ of size $n/K$, the nuisance parameter estimator $\widehat{\eta}_0=\widehat{\eta}_0((\bW_i)_{i\in I^C})$ obey the following conditions for all $n/K \geq 1$: with $P$-probability no less than $1 - \Delta_n$, 
        \[  \|\widehat{\eta}_0 - \eta_0\|_{P,\infty} \leq C, \|\widehat{\eta}_0 - \eta_0\|_{P,2} \leq \delta_n,\]
        and for the score $\psi$ in (\ref{eqn:score2}), where $\widehat{\eta} = (\widehat{\ell}_0,\widehat{m}_{10}, ..., \widehat{m}_{p0}) $,
            \[ \|\widehat{m}_{j0}-m_{j0}\|_{P,2} \times (\|\widehat{m}_{j0}-m_{j0}\|_{P,2} + \|\widehat{\ell}_{0}-\ell_{0}\|_{P,2} ) \leq \delta_nn^{-1/2}\text{  for  } j=1,...,p.\]
    \end{enumerate}
\end{assumption}

The following lemma is Theorem 4.1 from \cite{dml} but expanded to the case $p>1$.
\begin{lemma}[DML inference in the partially linear regression model with $p>1$]\label{theorem:plr}
    Suppose that Assumption \ref{assn:plr} holds. Then the DML2 estimator constructed in Definition 3.2 of \cite{dml} using the score (\ref{eqn:score2}) obeys
    \[  \boldsymbol{\Sigma}^{-1/2}\sqrt{n}(\widehat{\bbeta}_0 - \bbeta_0) \stackrel{D}{\rightarrow} N(\bold{0}, \bold{I}_p),\]
    uniformly over $P \in \mathcal{P}$, where $\boldsymbol{\Sigma} = [E(\bV\bV^T)]^{-1}E(U^2\bV\bV^T)[E(\bV\bV^T)]^{-1}$. The result continues to hold if $\boldsymbol{\Sigma}$ is replaced by $\widehat{\boldsymbol{\Sigma}}$ from Theorem 3.2 from \cite{dml}.
\end{lemma}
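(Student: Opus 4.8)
The plan is to exploit the linearity of the score (\ref{eqn:score2}) in $\bbeta$, which yields a closed form for the DML2 estimator and reduces the problem to two objects: an empirical Jacobian and the normalized empirical score evaluated at the truth. Writing $\bV_i = \bX_i - m_0(\bS_i)$ and $\widehat{\bV}_i = \bX_i - \widehat{m}(\bS_i)$, the estimator solves $\frac{1}{n}\sum_i \psi(\bW_i; \widehat{\bbeta}_0, \widehat{\eta}) = \bzero$, and since $\psi$ is affine in $\bbeta$ one obtains
\begin{align*}
\sqrt{n}(\widehat{\bbeta}_0 - \bbeta_0) = \widehat{\mathbf{J}}^{-1}\, \frac{1}{\sqrt{n}}\sum_{i=1}^n \psi(\bW_i; \bbeta_0, \widehat{\eta}), \qquad \widehat{\mathbf{J}} := \frac{1}{n}\sum_{i=1}^n \widehat{\bV}_i\widehat{\bV}_i^T.
\end{align*}
First I would establish $\widehat{\mathbf{J}} \xrightarrow{P} \mathbf{J}_0 := E[\bV\bV^T]$, which is invertible by the singular-value bound in Assumption \ref{assn:plr}(c): expanding $\widehat{\bV}_i = \bV_i - (\widehat{m}-m_0)(\bS_i)$, the cross terms vanish in conditional mean because $E[\bV_i \mid \bS_i] = \bzero$ (cross-fitting lets us treat $\widehat{m}$ as fixed on each fold), and the purely quadratic term is $O_P(\delta_n^2)$.

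The second step treats the oracle score. Using $Y_i - h_0(\bS_i) = \bV_i^T\bbeta_0 + U_i$, a direct computation shows $\psi(\bW_i; \bbeta_0, \eta_0) = \bV_i U_i$, which has mean $\bzero$ by $E[U_i\mid\bX_i,\bS_i]=0$ and covariance $E[U^2\bV\bV^T]$, whose minimum eigenvalue is bounded below by Assumption \ref{assn:plr}(c). A (uniform, Lyapunov-type) CLT under the moment bounds (b)--(d) then gives $\frac{1}{\sqrt{n}}\sum_i \psi(\bW_i;\bbeta_0,\eta_0) \xrightarrow{D} N(\bzero, E[U^2\bV\bV^T])$.

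The heart of the argument, and the main obstacle for the vector extension, is showing the plug-in remainder $\frac{1}{\sqrt{n}}\sum_i [\psi(\bW_i;\bbeta_0,\widehat{\eta}) - \psi(\bW_i;\bbeta_0,\eta_0)]$ is $o_P(1)$. Following \cite{dml}, I would condition on the complementary fold $I^C$ used to form $\widehat{\eta}$, so that on fold $I$ the summands are i.i.d.\ with $\widehat{\eta}$ fixed, and split each fold contribution into a centered empirical-process part and a conditional-bias part. With $\Delta_h = \widehat{h}-h_0$ and $\Delta_m = \widehat{m}-m_0$ (functions of $\bS$), an explicit expansion gives
\begin{align*}
\psi(\bW_i;\bbeta_0,\widehat{\eta}) - \psi(\bW_i;\bbeta_0,\eta_0) = -U_i\Delta_m - \Delta_h\bV_i + (\Delta_m^T\bbeta_0)\bV_i + \Delta_h\Delta_m - (\Delta_m^T\bbeta_0)\Delta_m.
\end{align*}
Since $\Delta_h,\Delta_m$ depend only on $\bS$, the restrictions $E[U_i\mid\bS_i]=0$ and $E[\bV_i\mid\bS_i]=\bzero$ annihilate all three \emph{linear} terms in conditional expectation -- this is exactly the Neyman orthogonality of the score -- leaving only $E[\Delta_h\Delta_m]$ and $E[(\Delta_m^T\bbeta_0)\Delta_m]$. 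By Cauchy--Schwarz and $\|\bbeta_0\|_\infty\le C$, these are bounded in norm by $\|\Delta_m\|_{P,2}(\|\Delta_h\|_{P,2} + \|\Delta_m\|_{P,2})$, so the scaled conditional bias is $O_P(\sqrt{n}\,\delta_n n^{-1/2}) = o_P(1)$ by the product-rate condition in Assumption \ref{assn:plr}(e). The centered empirical-process part has conditional variance at most $\|\psi(\cdot;\bbeta_0,\widehat{\eta}) - \psi(\cdot;\bbeta_0,\eta_0)\|_{P,2}^2$, which the moment bounds (d) and $\|\widehat{\eta}-\eta_0\|_{P,2}\le\delta_n$ drive to zero, so conditional Chebyshev yields $o_P(1)$ with no Donsker requirement.

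Combining via Slutsky gives $\sqrt{n}(\widehat{\bbeta}_0-\bbeta_0) \xrightarrow{D} \mathbf{J}_0^{-1}N(\bzero, E[U^2\bV\bV^T])$, i.e.\ $\boldsymbol{\Sigma}^{-1/2}\sqrt{n}(\widehat{\bbeta}_0-\bbeta_0)\xrightarrow{D} N(\bzero,\mathbf{I}_p)$ with $\boldsymbol{\Sigma} = \mathbf{J}_0^{-1}E[U^2\bV\bV^T]\mathbf{J}_0^{-1}$; because the constants $c,C,q$ and the sequences $\delta_n,\Delta_n$ are fixed over $\mathcal{P}$, every bound above is uniform in $P$, giving uniform convergence. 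Replacing $\boldsymbol{\Sigma}$ by $\widehat{\boldsymbol{\Sigma}}$ follows from $\widehat{\boldsymbol{\Sigma}}\xrightarrow{P}\boldsymbol{\Sigma}$, proved by the same plug-in-consistency arguments applied to the sample analogues of $E[\bV\bV^T]$ and $E[U^2\bV\bV^T]$ together with consistency of $\widehat{\bbeta}_0$. The only genuinely new work relative to the scalar proof of \cite{dml} is verifying that the matrix/vector bookkeeping preserves these bounds -- in particular that the finitely many cross-component products $\Delta_{m_j}\Delta_{m_{j'}}$ and $\Delta_{m_j}\Delta_h$ are each controlled by the componentwise product-rate condition, and that the eigenvalue conditions in (c) keep $\widehat{\mathbf{J}}^{-1}$ and $\boldsymbol{\Sigma}^{-1/2}$ well-behaved.
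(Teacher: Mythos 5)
Your proposal is correct, but it takes a genuinely different route from the paper. The paper does not re-derive the asymptotic expansion: it verifies Assumptions 3.1 and 3.2 of \cite{dml} for the vector-valued score (Neyman orthogonality via the Gateaux derivative, nondegeneracy of $J_0=E[\psi^a]$, and explicit bounds on the quantities $m_n$, $m_n'$, $r_n$, $r_n'$, $\lambda_n$, $\lambda_n'$ in five steps) and then invokes Theorems 3.1 and 3.2 of \cite{dml} as black boxes, which deliver both the uniform normality and the consistency of $\widehat{\boldsymbol{\Sigma}}$ in one stroke. You instead prove asymptotic linearity directly from the closed form $\sqrt{n}(\widehat{\bbeta}_0-\bbeta_0)=\widehat{\mathbf{J}}^{-1}n^{-1/2}\sum_i\psi(\bW_i;\bbeta_0,\widehat{\eta})$, splitting the plug-in remainder fold-by-fold into a conditional bias (killed exactly by $E[U\mid\bX,\bS]=0$ and $E[\bV\mid\bS]=\bzero$ up to the second-order products, then controlled by the product-rate condition) and a centered part controlled by conditional Chebyshev. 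The two arguments bound the same underlying objects --- your conditional bias is the paper's $\lambda_n,\lambda_n'$, your centered-part variance is its $r_n'$, your Jacobian perturbation is its $r_n$ --- so the substance coincides; your version is more transparent about where orthogonality and cross-fitting enter and exploits that the score is exactly quadratic in $\eta$ (so no Taylor remainder is needed), while the paper's version gets uniformity over $\mathcal{P}$, the Lyapunov-type moment verification, and the variance-estimator consistency for free from the cited meta-theorems rather than having to supply them by hand as you would. One point to tighten if you write this out: your uniform CLT for $n^{-1/2}\sum_i U_i\bV_i$ needs a $(2+\epsilon)$-moment of $U\bV$, and Assumption \ref{assn:plr} as stated only controls $E[U^2\mid\bS]$; the paper's Step 4 implicitly uses $\|U\|_{P,q}\le C$ (as in Assumption 4.1 of \cite{dml}), so you should either invoke that bound explicitly or note it as an additional hypothesis.
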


The proof for Lemma \ref{theorem:plr} follows the proof of Theorem 4.1 in \cite{dml} closely, only needing to add steps to deal with the vector-valued $\psi$ and matrix-valued $\psi^a$, and changing some assumptions slightly for convenience. The proof verifies Assumptions 3.1 and 3.2 from \cite{dml}, from which the conclusion of Lemma \ref{theorem:plr} follows from Theorems 3.1 and 3.2 from \cite{dml}.

\begin{proof}[Proof of Lemma \ref{theorem:plr}]
    Observe that the score (\ref{eqn:score2}) is linear in $\bbeta$:
    \begin{align*}
        \psi(\bW; \bbeta, \eta) &= \{ Y - \ell(\bS) - (\bX - \bold{m}(\bS))^T\bbeta\}(\bX - \bold{m}(\bS)) = \psi^a(\bW; \eta)\bbeta + \psi^b(\bW; \eta), \\
        \psi^a(\bW; \eta)&= -(\bX - \bold{m}(\bS))(\bX - \bold{m}(\bS))^T, \quad \psi^b(\bW; \eta) = (Y - \ell(\bS))(\bX - \bold{m}(\bS))
    \end{align*}
    Therefore, it is sufficient to verify Assumptions 3.1 and 3.2 from \cite{dml}. Let $\mathcal{T}_n$ be the set of all $\eta = (\ell, m_1, ..., m_p)$ consisting of $P-$square-integrable functions $\ell, m_1, ..., m_p$ such that 
    \begin{align*}
        \|\widehat{\eta}_0 - \eta_0\|_{P,q} \leq C, \|\widehat{\eta}_0 - \eta_0\|_{P,2} &\leq \delta_n, \\
        \|\widehat{m}_{j0}-m_{j0}\|_{P,2} \times (\|\widehat{m}_{j0}-m_{j0}\|_{P,2} + \|\widehat{\ell}_{0}-\ell_{0}\|_{P,2} ) &\leq \delta_nn^{-1/2}.
    \end{align*}
    We replace the constant $q$ and the sequence $\{\delta_n\}$ in Assumptions 3.1 and 3.2 from \cite{dml} by $q/2$ and $\{\delta_n'\}$, with $\delta'_n=(2p^2C + \sqrt{C}p + \sqrt{pC} + pC+pC\sqrt{pC} + \sqrt{p}pC^2+4\sqrt{p})(\delta_n \vee n^{-[(1-4/q)\wedge(1/2)]})$ for all $n$. As in \cite{dml}, we use five steps.

    \textbf{Step 1.} Verify Neyman orthogonality. Note that $E[\psi(\bW; \bbeta_0, \eta_0)]=0$ by the definitions of $\bbeta_0, \eta_0$. For any $\eta\in\mathcal{T}_n$, the Gateaux derivative in the direction $\eta - \eta_0$ is given by, for $r=0$, (see derivation in Step 5):
    \begin{align*}
        \partial_\eta E[\psi(\bW; \bbeta_0, \eta_0)][\eta-\eta_0] &= -E[(Y - \ell_0(\bS))(\bmm(\bS) - \bmm_0(\bS))] - E[(\ell(\bS) - \ell_0(\bS))(\bD - \bmm_0(\bS))] \\&+ E[(\bD - \bmm_0(\bS))(\bmm(\bS)-\bmm_0(\bS))^T\bbeta_0] \\
        &+ E[(\bmm(\bS)-\bmm_0(\bS))(\bD - \bmm_0(\bS))^T\bbeta_0].
    \end{align*}
    By the law of iterated expectation, and since  $\bV=\bX-\bmm_0(\bS)$ and $U = Y-\ell_0(\bS)$:
    \begin{align*}
        \partial_\eta E[\psi(\bW; \bbeta_0, \eta_0)][\eta-\eta_0] &= E\{E[U\times(\bmm(\bS) - \bmm_0(\bS))|\bX, \bS]\} \\
        &- E\{E[(\ell(\bS) - \ell_0(\bS))\bV|\bS]\} \\
        &+ E\{E[(\bmm(\bS) - \bmm_0(\bS))\bV^T\bbeta_0|\bS]\} \\
        &+ E\{E[\bV(\bmm(\bS) - \bmm_0(\bS))^T\bbeta_0|\bS]\},
    \end{align*}
    which is equal to $0$ since $E[\bV|\bS]=\bold{0}$ and $E[U|\bX, \bS]=0$. This gives Assumption 3.1(d) from \cite{dml} with $\lambda_n=0$.

    \textbf{Step 2.} By definition,
    \begin{align*}
        J_0 &= E[\psi^a(\bW; \eta_0)] \\
        &= E[-(\bX - \bmm_0(\bS))(\bX - \bmm_0(\bS))^T] \\
        &= E[-\bV\bV^T]
    \end{align*}
    By assumption \ref{assn:plr}(c), the singular values of this matrix are between $c_0$ and $C$. This satisfies Assumption 3.1 (e) from \cite{dml}. Since the map $\eta \mapsto E[\psi(\bW; \bbeta_0, \eta)]$ is twice Gateux-differentiable in $T$, this completes verification of Assumption 3.1 from \cite{dml}.

    \textbf{Step 3.} Assumption 3.2(a) from \cite{dml} holds by construction of $\mathcal{T}_n$ and Assumption \ref{assn:plr}(e). In addition, $\psi(\bW; \bbeta_0, \eta_0) = U\bV$, so the eigenvalues of the matrix
    \begin{align*}
        E[\psi(\bW; \bbeta_0, \eta_0)\psi(\bW; \bbeta_0, \eta_0)^T] = E[U^2\bV\bV^T]
    \end{align*}
    are bounded below by $c_0$ by Assumption \ref{assn:plr}(c). This verifies Assumption 3.2(d) from \cite{dml}.

    \textbf{Step 4.} Next, we verify Assumption 3.2(b) from \cite{dml}. For any $\eta = (\ell, \bmm) \in \mathcal{T}_n$,
    \begin{align*}
        m'_n &= E[\|\psi^a(\bS; \eta)\|^{q/2}]^{2/q} \\
        &= \bigl\| \|(\bX - \bmm(\bS))(\bX - \bmm(\bS))^T\| \bigr\|_{P, q/2} \\
        &\leq  \bigl\| \|(\bX - \bmm_0(\bS))(\bX - \bmm_0(\bS))^T\| + \|(\bX - \bmm_0(\bS))(\bmm_0(\bS) - \bmm(\bS))^T\| \\
        &+ \|(\bmm_0(\bS)-\bmm(\bS))(\bX-\bmm_0(\bS))^T\| + \|(\bmm_0(\bS)-\bmm(\bS))(\bmm_0(\bS) - \bmm(\bS))^T\| \bigr\|_{P, q/2} \\
        &= \bigl\| \|\bV\bV^T\| + 2\|\bV(\bmm_0(\bS) - \bmm(\bS))^T\|  + \|(\bmm_0(\bS) - \bmm(\bS))(\bmm_0(\bS) - \bmm(\bS))^T\| \bigr\|_{P, q/2}\\
        &\leq \bigl\| \|\bV\bV^T\| \bigr\|_{P, q/2} + 2\bigl\| \|\bV(\bmm_0(\bS) - \bmm(\bS))^T\|\bigr\|_{P, q/2} + \bigl\|\|(\bmm_0(\bS) - \bmm(\bS))(\bmm_0(\bS) - \bmm(\bS))^T\|\bigr\|_{P, q/2}\\&= \bigl\| \|\bV\|^2 \bigr\|_{P, q/2} + 2\bigl\| \|\bV\|\times\|(\bmm_0(\bS) - \bmm(\bS))\|\bigr\|_{P, q/2} + \bigl\|\|(\bmm_0(\bS) - \bmm(\bS))\|^2\bigr\|_{P, q/2}\\
        &\leq \bigl\| \|\bV\| \bigr\|_{P,q}^2 + 2\bigl\| \|\bV\|\bigr\|_{P,q}\times \bigl\|\|\bmm_0(\bS) - \bmm(\bS)\|\bigr\|_{P,q} + \bigl\|\|\bmm_0(\bS) - \bmm(\bS)\|\bigr\|_{P,q}^2,
    \end{align*}
    by the triangle inequality for the $\|\cdot\|$-norm, the triangle inequality for the $\|\cdot\|_{P, q/2}$-norm, the fact that for vectors $\bu$ and $\bv$, $\|\bu\bv^T\|=\|\bu\|\|\bv\|$, and by the Cauchy-Schwarz inequality. Note that for a random vector $\bZ$ with $p$ elements, if $\|Z_j\|_{P,q}<C$ for $j=1, ..., p$, then $\| \|\bZ\| \|_{P,q} \leq \| \|\bZ\|_1 \|_{P,q} \leq \|Z_1\|_{P,q} + ... + \|Z_p\|_{P,q} \leq pC$, by the fact that $\|\bZ\| \leq \|\bZ\|_1$ and monotonicity of expectation, then the triangle inequality, then the assumption that $\|Z_j\|_{P,q}<C$ for $j=1,...,p$. By assumption $ \|V_1\|_{P,q} + ... + \|V_p\|_{P,q} \leq C$. Also, for $\eta \in \mathcal{T}_n, \|\eta_0 - \eta\|_{P,q}< C $. Therefore we have that:
    \begin{align*}
        m'_n & \leq \| \|\bV\| \|_{P,q}^2 + 2\| \|\bV\|\|_{P,q}\times \|\|\bmm_0(\bS) - \bmm(\bS)\|\|_{P,q} + \|\|\bmm_0(\bS) - \bmm(\bS)\|\|_{P,q}^2 \\
        & \leq (pC)^2 + 2(pC)(pC) + (pC)^2= 4p^2C^2
    \end{align*}
    bounding $m'_n$ as desired.

    Next we address $m_n$. Note that by assumption, $\|\bbeta_0\| < \sqrt{p}C$ since $\|\beta_0\|_{\infty} < C$. Note also that $Y - \ell_0(\bS) = U + \bV^T\bbeta_0$.
    \begin{align*}
        m_n &= (E[\|\psi(\bW; \bbeta_0, \eta)\|^{q/2}])^{2/q} \\
        &= \bigl\| \| \psi(\bW; \bbeta_0, \eta)\| \bigr\|_{P, q/2} \\
        &= \bigl\| \|U\bV + U(\bmm_0(\bS) - \bmm(\bS)) + (\ell_0(\bS) - \ell(\bS))\bV \\
        &\quad \quad + (\ell_0(\bS) - \ell(\bS))(\bmm_0(\bS) - \bmm(\bS)) \\
        &\quad \quad - (\bmm_0(\bS) - \bmm(\bS))^T\bbeta_0\bV - (\bmm_0(\bS) - \bmm(\bS))(\bmm_0(\bS) - \bmm(\bS))^T\bbeta_0 \| \bigr\|_{P, q/2} \\
        &\leq \bigl\| \|U\bV\| + \|U(\bmm_0(\bS) - \bmm(\bS))\| + \|(\ell_0(\bS) - \ell(\bS))\bV\| + \|(\ell_0(\bS) - \ell(\bS))(\bmm_0(\bS) - \bmm(\bS))\| \\
        &\quad \quad + \|(\bmm_0(\bS) - \bmm(\bS))^T\bbeta_0\bV\| + \|(\bmm_0(\bS) - \bmm(\bS))(\bmm_0(\bS)-\bmm(\bS))^T\bbeta_0 \| \bigr\|_{P, q/2} \\
        &\leq \bigl\| \|U\bV\|\bigr\|_{P, q/2} + \bigl\|\|U(\bmm_0(\bS) - \bmm(\bS))\|\bigr\|_{P, q/2} + \bigl\|\|(\ell_0(\bS) - \ell(\bS))\bV\|\bigr\|_{P, q/2} \\
        & \quad \quad +\bigl\|\|(\ell_0(\bS) - \ell(\bS))(\bmm_0(\bS) - \bmm(\bS))\|\bigr\|_{P, q/2} \\
        &\quad \quad + \bigl\|\|\bV(\bmm_0(\bS) - \bmm(\bS))^T\bbeta_0\|\bigr\|_{P, q/2} + \bigl\|\|(\bmm_0(\bS) - \bmm(\bS))(\bmm_0(\bS)-\bmm(\bS))^T\bbeta_0 \| \bigr\|_{P, q/2} \\
        &\leq \bigl\| U\bigr\|_{P, q} \bigl\| \|\bV\|\bigr\|_{P, q} + \bigl\| U\bigr\|_{P, q}\bigl\| \|\bmm_0(\bS) - \bmm(\bS)\|\bigr\|_{P, q} +  \bigl\|\ell_0(\bS) - \ell(\bS)\bigr\|_{P, q}  \bigl\| \|\bV\|\bigr\|_{P, q} \\
        &\quad \quad + \bigl\| \ell_0(\bS) - \ell(\bS)\bigr\|_{P, q} \bigl\|\|\bmm_0(\bS) - \bmm(\bS)\|\bigr\|_{P, q} \\
        &\quad \quad + \bigl\|\|\bV\|\bigr\|_{P, q} \bigl\|\|\bmm_0(\bS) - \bmm(\bS)\|\bigr\|_{P, q} \|\bbeta_0\| + \bigl\|\|\bmm_0(\bS) - \bmm(\bS)\|\bigr\|_{P, q}^2 \|\bbeta_0\|  \\
        & \leq pC^2 + pC^2 + pC^2 + pC^2 + p^2C^3 + p^2C^3 \\
        &= 4pC^2 + 2p^2C^3,
    \end{align*}
    bounding $m_n$ as desired. The first two inequalities are due to the triangle inequality; 
    the third inequality due to Cauchy-Schwartz; and the final inequality by assumed bounds on the quantities in the previous step and the bounds established in the derivation of the bound on $m_n'$ above.

    \textbf{Step 5.} Finally, we verify the conditions of Assumption 3.2(c) from \cite{dml}. Starting with $r_n = \sup_{\eta \in \mathcal{T}_n}\|E[\psi^a(\bW; \eta)] - E[\psi^a(\bW; \eta_0)]\|$,

    \begin{align*}
        \|E[\psi^a(\bW; \eta)] - E[\psi^a(\bW; \eta_0)]\| &= \| E[\psi^a(\bW; \eta) - \psi^a(\bW; \eta_0)] \|\\
        &= \| E[-(\bX - \bmm(S))(\bX - \bmm(S))^T + \bV\bV^T] \| \\
        &\leq \|E[\bV(\bmm_0(\bS) - \bmm(\bS))^T]\| + \| E[(\bmm_0(\bS) - \bmm(\bS))\bV^T] \| \\
        &\quad \quad + \|E[(\bmm_0(\bS) - \bmm(\bS))(\bmm_0(\bS) - \bmm(\bS))^T]\| \\
        &\leq 2E\bigl[ \|\bV\| \cdot \|(\bmm_0(\bS) - \bmm(\bS))\| \bigr] + E\bigl[ \|(\bmm_0(\bS) - \bmm(\bS))\|^2 \bigr] \\
        &\leq 2\sqrt{E\bigl[\|\bV\|^2\bigr]E\bigl[\|\bmm_0(\bS) - \bmm(\bS)\|^2\bigr]} + E\bigl[\|\bmm_0(\bS) - \bmm(\bS)\|^2 \bigr] \\
        &= 2 \bigl\|\|\bV\| \bigr\|_{P,2} \cdot \bigl\| \|\bmm_0(\bS) - \bmm(\bS)\| \bigr\|_{P,2} + \bigl\|\bmm_0(\bS) - \bmm(\bS)\| \bigr\|^2_{P,2} \\
        &\leq (pC)(p\delta_n) + (p^2 C \delta_n) \\
        &\leq \delta_n'
    \end{align*}
    Where the first inequality is by the triangle inequality, the second inequality is by Jensen's inequality, the third inequality is by Cauchy-Schwarz, and the following inequality by the assumption that $\|V_j\|_{P,2} < C$ and $\|m_{0j} - m_j\|_{P,2} \leq \delta_n$ and $\|m_{0j} - m_j\|_{P,q} \leq C$ for $\eta \in \mathcal{T}_n$ for $j=1,...,p$ and $q > 4$ (since $\|f\|_{P,q_1} \leq \|f\|_{P, q_2}$ for $0 < q_1 < q_2 < \infty$ by Jensen's inequality with $\phi(x)=|x|^{q_2/q_1}$). This establishes the bound on $r_n$. 
    
    Next we establish the bound on $r_n' = \sup_{\eta \in \mathcal{T}_n}(E[\|\psi(\bW; \bbeta_0, \eta) - \psi(\bW; \bbeta_0, \eta_0)\|^2])^{1/2}$:

    \begin{align*}
        \bigl(E[\|\psi(\bW; \bbeta_0, \eta) - \psi(\bW; \bbeta_0, \eta_0)\|^2] \bigr)^{1/2} &= \bigl\| \|\psi(\bW; \bbeta_0, \eta) - \psi(\bW; \bbeta_0, \eta_0)  \bigr\|_{P,2} \\
        &= \bigl\| \|\{Y - \ell(\bS) - (\bX - \bmm(\bS))^T\bbeta_0\}(\bX - \bmm(\bS))- \\
        &\quad \quad \{Y - \ell_0(\bS) - (\bX - \bmm_0(\bS))^T\bbeta_0\}(\bX - \bmm_0(\bS)) \| \bigr\|_{P,2} \\
        &\leq \bigl\| \| U(\bmm_0(\bS) - \bmm_0(\bS))\|  + \|(\ell_0(\bS) - \ell(\bS))\bV\| \\
        &\quad \quad + \|(\ell_0(\bS) - \ell(\bS))(\bmm_0(\bS) - \bmm_0(\bS))\|\\
        &\quad \quad + \|\bV(\bmm_0(\bS) - \bmm(\bS))^T\bbeta_0\| \\
        &\quad \quad+ \|(\bmm_0(\bS) - \bmm(\bS))(\bmm_0(\bS) - \bmm(\bS))^T\bbeta_0 \| \bigr\|_{P,2} \\
        &\leq \bigl\| |U|\cdot\| (\bmm_0(\bS) - \bmm_0(\bS))\|\bigr\|_{P,2}  + \bigl\||\ell_0(\bS) - \ell(\bS)|\cdot\|\bV\|\bigr\|_{P,2} \\
        &\quad \quad + \bigl\||\ell_0(\bS) - \ell(\bS)|\cdot\|(\bmm_0(\bS) - \bmm_0(\bS))\|\bigr\|_{P,2} \\
        &\quad \quad + \bigl\|\|\bV(\bmm_0(\bS) - \bmm(\bS))^T\bbeta_0\|\bigr\|_{P,2} \\
        &\quad \quad+ \bigl\|\|(\bmm_0(\bS) - \bmm(\bS))(\bmm_0(\bS) - \bmm(\bS))^T\bbeta_0 \| \bigr\|_{P,2} \\
        &\leq \bigl\| |U|\cdot\| (\bmm_0(\bS) - \bmm_0(\bS))\|\bigr\|_{P,2}  + \bigl\||\ell_0(\bS) - \ell(\bS)|\cdot\|\bV\|\bigr\|_{P,2} \\
        &\quad \quad + \bigl\||\ell_0(\bS) - \ell(\bS)|\cdot\|(\bmm_0(\bS) - \bmm_0(\bS))\|\bigr\|_{P,2} \\
        &\quad \quad + \bigl\|\|\bV\|\cdot\|\bmm_0(\bS) - \bmm(\bS)\|\bigr\|_{P,2}\|\bbeta_0\| \\
        &\quad \quad+ \bigl\|\|\bmm_0(\bS) - \bmm(\bS)\|^2 \bigr\|_{P,2}\|\bbeta_0 \| \\
        &\leq \sqrt{C}p\delta_n + \sqrt{pC}\delta_n + pC\delta_n + \sqrt{pC}p\delta_nC + \sqrt{p}Cp\delta_nC \\
        &= (\sqrt{C}p + \sqrt{pC}+pC + \sqrt{pC}pC + \sqrt{p}C^2p)\delta_n \\
        &\leq \delta_n'
    \end{align*}
    as desired, where the final inequality is due to the assumptions that $\|E[U^2|\bS]\|_{P,\infty} \leq C$, $\|E[V_j^2|\bS]\|_{P,\infty} \leq C$ for $j=1,...,p$, and $\|\eta_0-\eta\|_{P,\infty}\leq C$ for $\eta \in \mathcal{T}_n$, and applying the law of iterated expectation.

    Finally we check the condition that $\lambda_n'= \sup_{r \in (0,1), \eta \in \mathcal{T}_n}\|\partial_r^2E[\psi(\bW; \bbeta_0, \eta_0 + r(\eta - \eta_0))]\| \leq \delta'_n/\sqrt{n}$. Define $f(r) := E[\psi(\bW; \bbeta_0, \eta_0 + r(\eta - \eta_0)], r\in(0,1)$. Then for $r \in (0, 1),$
   \begin{align*}
       \delta^2_r f(r) &= E\bigl[ 2(\ell(\bS) - \ell_0(\bS))\times (\bmm(\bS) - \bmm_0(\bS)) - 2(\bmm(\bS)-\bmm_0(\bS))(\bmm(\bS)-\bmm_0(\bS))^T\bbeta_0  \bigr] \\
       &= 2E\bigl[ (\ell(\bS)-\ell_0(\bS))\times(\bmm(\bS)-\bmm_0(\bS))  \bigr] - 2E\bigl[ (\bmm(\bS) - \bmm_0(\bS))(\bmm(\bS) - \bmm_0(\bS))^T\bbeta_0  \bigr]
   \end{align*}

   Then note that:
   \begin{align*}
       \| E\bigl[ (\ell(\bS)-\ell_0(\bS))\times(\bmm(\bS)-\bmm_0(\bS)) \bigr]\| &= \sqrt{ \sum_{j=1}^p E\bigl[ (\ell(\bS) - \ell_0(\bS))\times (m_j(\bS)-m_{0j}(\bS))  \bigr]^2  } \\
       & \leq \sqrt{\sum_{j=1}^pE\bigl[ (\ell(\bS) - \ell_0(\bS))^2  \bigr] \times E\bigl[ (m_j(\bS) - m_{0j}(\bS))^2  \bigr]  } \\
       &= \sqrt{ \sum_{j=1}^p \|\ell - \ell_0\|^2_{P,2} \times \|m_j - m_{0j}\|^2_{P,2} }
   \end{align*}
   For $\eta \in T$, $\|\ell - \ell_0\|^2_{P,2} \times \|m_j - m_{0j}\|^2_{P,2} \leq \delta_n^2n^{-1}$, so:
   \begin{align*}
       \| E\bigl[ (\ell(\bS)-\ell_0(\bS))\times(\bmm(\bS)-\bmm_0(\bS)) \bigr]\| & \leq \sqrt{\sum_{j=1}^p\delta_n^2n^{-1}}\\
       &=\sqrt{p\delta_n^2n^{-1}}\\
       &= \sqrt{p}\delta_nn^{-1/2}
   \end{align*}
   By the same reasoning, $\| E\bigl[ (\bmm(\bS)-\bmm_0(\bS))^2 \bigr]\| \leq \sqrt{p}\delta_nn^{-1/2}$. Then, we have that $\|\partial_r^2E[\psi(\bW; \bbeta_0, \eta_0 + r(\eta - \eta_0))]\| = \| 2E\bigl[ (\ell(\bS)-\ell_0(\bS))\times(\bmm(\bS)-\bmm_0(\bS))  \bigr] - 2E\bigl[ (\bmm(\bS) - \bmm_0(\bS))(\bmm(\bS) - \bmm_0(\bS))^T\bbeta_0  \bigr]\| \leq 2\|E\bigl[ (\ell(\bS)-\ell_0(\bS))\times(\bmm(\bS)-\bmm_0(\bS))  \bigr]\| + 2\|E\bigl[ (\bmm(\bS) - \bmm_0(\bS))(\bmm(\bS) - \bmm_0(\bS))^T\bbeta_0  \bigr]\|$, which does not depend on $r$, so that for $\eta \in T$, $\|\partial_r^2E[\psi(\bW; \bbeta_0, \eta_0 + r(\eta - \eta_0))]\| \leq 4\sqrt{p}\delta_nn^{-1/2} \leq \delta'_n n^{-1/2}$, establishing the desired bound on $\lambda_n'$.
    
    With the conditions of Assumptions 3.1 and 3.2 from \cite{dml} verified, Lemma \ref{theorem:plr} follows from Facts 3.1 and 3.2 from \cite{dml}.
\end{proof}

Lemma \ref{l:bounded_rate} follows directly from Theorem 3.3 from \citep{eberts2013optimal}, using $\rho = \ln(n)$. Note that the estimated function from the least-squares SVM with Gaussian kernel and least-squares loss analyzed in \citep{eberts2013optimal} is identical to the posterior mean of a Gaussian process with Gaussian kernel; see e.g. \citep{kanagawa2018gaussian}.

\begin{lemma}[Convergence rate for bounded regression using GP posterior mean]\label{l:bounded_rate}
    Let $\widehat{f}$ be an estimate of $f(S) = E(Y|S)$ obtained by a Gaussian process posterior mean, with Gaussian kernel, using parameters $\gamma_n, \lambda_n$ selected by the training-validation scheme in \citep{eberts2013optimal} using grids as specified in Algorithm 1. Let $P_S$ be the marginal distribution of $S$ over $\mathbb{R}^d$ with support in the $\|\cdot\|_2$-unit ball. Let the density of $P_S$ be $p_S \in L_q(\mathbb{R}^d)$ for some $q \geq 1$, and let $f \in L_2(\mathbb{R}^d) \cap L_{\infty}(\mathbb{R}^d)$ and $f \in B^\alpha_{2s, \infty}$ for $\alpha \geq 1$ and $s \geq 1$ such that $\frac{1}{q} + \frac{1}{s} = 1$. Let $Y \in [-M, M], M>0$ and let $\widehat{f}$ be clipped at $-M, M$. 
    
    Then with probability no less than $1 - \frac{1}{n}$, $\|\widehat{f} - f\|_{P,2} \leq C \log(n)n^{-\frac{\alpha}{2\alpha + d} + \xi}$, for all $\xi > 0$ and some $C>0$.
\end{lemma}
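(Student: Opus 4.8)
The plan is to reduce the lemma to a direct application of Theorem~3.3 of \citep{eberts2013optimal}, so the work consists of (i) identifying our estimator with the one they analyze and (ii) checking that the hypotheses of the lemma imply theirs. First I would establish the representational equivalence between the Gaussian-process posterior mean in (\ref{eq:kriging_eqn1}) and the least-squares support vector machine (equivalently, kernel ridge regression) with a Gaussian RBF kernel studied in \citep{eberts2013optimal}; this identity is standard, and I would cite \citep{kanagawa2018gaussian}. The only bookkeeping here is to match the regularization: the factor $|\bk^C|\lambda$ inside the inverse in (\ref{eq:kriging_eqn1}) is precisely the scaled ridge penalty of the empirical regularized least-squares functional, so a point of the $\lambda$-grid in Algorithm~\ref{a:SDML} is a valid choice of their regularization parameter, while the kernel bandwidth $\gamma$ is their Gaussian width.

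Next I would verify the hypotheses of Theorem~3.3 one by one. The boundedness $Y\in[-M,M]$ together with the clipping of $\widehat f$ at $\pm M$ matches their bounded-response, clipped-estimator setting; the assumption $f\in B^\alpha_{2s,\infty}$ with $\alpha\ge 1$, combined with $f\in L_2(\mathbb{R}^d)\cap L_\infty(\mathbb{R}^d)$, is their Besov smoothness requirement on the regression function $f(S)=E(Y\mid S)$; and the condition that $P_S$ is supported in the $\|\cdot\|_2$-unit ball with density $p_S\in L_q(\mathbb{R}^d)$, where $\tfrac{1}{q}+\tfrac{1}{s}=1$, is their H\"older-conjugate link between the integrability of the design density and the Besov index. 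I would also confirm that the $\lambda$- and $\gamma$-grids prescribed in Algorithm~\ref{a:SDML}, whose cardinalities grow polynomially in $n$, meet the cardinality-and-spacing conditions under which their training--validation selection attains the oracle rate, so that the data-driven pair $(\gamma_n,\lambda_n)$ performs as well as the optimal deterministic choice up to their logarithmic and $\xi$ slack.

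With the hypotheses in place, I would invoke Theorem~3.3 with the confidence parameter $\rho=\ln(n)$, which yields a bound holding with probability at least $1-e^{-\rho}=1-\tfrac{1}{n}$. Their theorem controls the excess least-squares risk, which for the squared loss equals $\|\widehat f - f\|_{P,2}^2$ because $f$ is the Bayes (conditional-mean) predictor and $\|\cdot\|_{P,2}$ reduces to the $L_2(P_S)$ norm for functions of $S$ alone; the resulting bound is of order $\log(n)\,n^{-\frac{2\alpha}{2\alpha+d}+\xi'}$ for every $\xi'>0$. Taking square roots, relabeling $\xi=\xi'/2$, and absorbing constants gives $\|\widehat f - f\|_{P,2}\le C\log(n)\,n^{-\frac{\alpha}{2\alpha+d}+\xi}$, the stated conclusion.

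I expect the main obstacle to be the verification in the second step that the fixed grids of Algorithm~\ref{a:SDML} satisfy the requirements of the training--validation selection result in \citep{eberts2013optimal}: their oracle guarantee presupposes grids whose spacing shrinks and whose sizes grow at prescribed polynomial rates in $n$, and one must check that the $O(n)$-point $\lambda$-grid and $O(n^{1/4})$-point $\gamma$-grid interlace the oracle values finely enough that the validation-selected pair inherits the oracle rate up to the logarithmic and $\xi$ factors. The remaining steps are essentially translations of the definitions of \citep{eberts2013optimal} into our notation and the elementary passage from the squared-risk bound to the $L_2$-norm bound.
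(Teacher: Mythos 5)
Your proposal follows exactly the route the paper takes: the paper's proof of this lemma is a direct citation of Theorem~3.3 of \citep{eberts2013optimal} with $\rho=\ln(n)$, together with the observation (credited to \citep{kanagawa2018gaussian}) that the clipped least-squares SVM with Gaussian kernel coincides with the GP posterior mean; your additional steps (hypothesis matching, the grid check, and the passage from the squared excess risk to the $L_2$ norm) are just the details the paper leaves implicit. The proposal is correct and essentially identical in approach.
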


Lemma \ref{l:unbounded_rate} follows directly from Theorem 3.6 from \citep{eberts2013optimal}, using $\widehat{\rho} = \ln(n)$ and $\overline{\rho} = \ln(n)$ and clipping the absolute value of the fitted function at $\min\{1, M_n\}$ rather than simply $M_n$, which does not change the result since by assumption the true function lies in $[-1, 1]$. Note that per the proof of Theorem 3.6, the constant $C$ in the original statement of Theorem 3.6 does not depend on either $\widehat{\rho}$ or $\overline{\rho}$ allowing these substitutions.

\begin{lemma}[Convergence rate for regression with normal errors using GP posterior mean]\label{l:unbounded_rate}
    Let $\widehat{f}$ be an estimate of $f(S) = E(Y|S)$ obtained by a Gaussian process posterior mean, with Gaussian kernel, using parameters $\gamma_n, \lambda_n$ selected by the training-validation scheme in \citep{eberts2013optimal} using grids as specified in Algorithm 1. Let $P_S$ be the marginal distribution of $S$ over $\mathbb{R}^d$ with support in the $\|\cdot\|_2$-unit ball. Let the density of $P_S$ be $p_S \in L_q(\mathbb{R}^d)$ for some $q \geq 1$, and let $f \in L_2(\mathbb{R}^d) \cap L_{\infty}(\mathbb{R}^d)$ and $f \in B^\alpha_{2s, \infty}$ for $\alpha \geq 1$ and $s \geq 1$ such that $\frac{1}{q} + \frac{1}{s} = 1$. Assume further that $f(S) \in [-1, 1]$.
    
    Let $Y_i = f(S_i) + \epsilon_i$ where $\epsilon_i \sim \text{ ind. } N(0, \sigma_i^2) $, and let there exist some constant $C_0$ such that all $\sigma^2_i < C_0$. Let $\widehat{f}$ be clipped so that $|\widehat{f}| \leq \min\{1, M_n\}$ where $M_n=4\sqrt{C_0}
    \sqrt{\ln(n)}$. 

    Then with probability no less than $1 - \frac{2}{n}$, $\|\widehat{f} - f\|_{P,2} \leq C\log(n) n^{-\frac{\alpha}{2\alpha + 2} + \xi}$, for all $\xi>0$ and some $C>0$.
\end{lemma}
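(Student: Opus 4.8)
The plan is to derive Lemma~\ref{l:unbounded_rate} as a direct specialization of Theorem~3.6 of \cite{eberts2013optimal}, which bounds the $L_2(P_S)$ error of the clipped least-squares SVM with Gaussian kernel and training--validation parameter selection under light-tailed noise. The first step is to invoke the identity, noted before Lemma~\ref{l:bounded_rate} and established in \cite{kanagawa2018gaussian}, between that SVM estimator and the Gaussian-process posterior mean with Gaussian kernel that defines $\widehat f$; this lets us treat the $\widehat f$ produced over the grids of Algorithm~\ref{a:SDML} as exactly the estimator to which Theorem~3.6 applies.

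Next I would verify that each hypothesis of the lemma supplies the corresponding hypothesis of Theorem~3.6: the locations lie in the $\|\cdot\|_2$-unit ball with marginal density $p_S \in L_q(\mathbb{R}^d)$; the target obeys $f \in L_2(\mathbb{R}^d)\cap L_\infty(\mathbb{R}^d)$, the Besov condition $f \in B^\alpha_{2s,\infty}$, and the conjugacy $\tfrac1q+\tfrac1s=1$; the candidate grids of Algorithm~\ref{a:SDML} meet the cardinality and spacing requirements imposed on the selection sets; and the noise $\epsilon_i \sim N(0,\sigma_i^2)$ with $\sup_i\sigma_i^2 < C_0$ is sub-Gaussian with uniformly bounded variance proxy, so the light-tail moment hypothesis holds. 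I would then instantiate the two free confidence parameters by $\widehat\rho=\overline\rho=\ln(n)$. With this choice the two exponential tail terms each become of order $1/n$, producing the overall confidence $1-2/n$, while the $\rho$-dependent polynomial factors collapse into the explicit $\log(n)$ prefactor and the arbitrarily small $\xi>0$ slack, giving the stated bound $C\log(n)\,n^{-\alpha/(2\alpha+2)+\xi}$.

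The one genuine deviation from Theorem~3.6 is that we clip $|\widehat f|$ at $\min\{1,M_n\}$, with $M_n=4\sqrt{C_0}\sqrt{\ln n}$, rather than at $M_n$. Since $f(S)\in[-1,1]$, the map $y\mapsto\operatorname{clip}_{[-1,1]}(y)$ is a nonexpansive projection onto an interval that contains the target, so the pointwise error of the $\min\{1,M_n\}$-clipped estimate never exceeds that of the $M_n$-clipped estimate; integrating against $P_S$ shows the tighter clipping only decreases $\|\widehat f - f\|_{P,2}$, and the Theorem~3.6 bound transfers unchanged.

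The main obstacle is the assertion, read off from the proof of Theorem~3.6, that the leading constant $C$ is independent of $\widehat\rho$ and $\overline\rho$; this independence is precisely what legitimizes setting $\widehat\rho=\overline\rho=\ln(n)$, for if $C$ absorbed a growing function of these parameters the bound would no longer be $O(\log(n)\,n^{-\alpha/(2\alpha+2)+\xi})$. Establishing it requires tracing the dependence on $\rho$ through the oracle inequality and the Gaussian-kernel approximation-error estimates of \cite{eberts2013optimal}, confirming that all such dependence appears only in the explicit logarithmic factor and the tail probabilities and not in the constant. The remaining work is routine bookkeeping.
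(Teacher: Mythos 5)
Your proposal is correct and follows essentially the same route as the paper: both derive the lemma by specializing Theorem 3.6 of \cite{eberts2013optimal} via the SVM/GP-posterior-mean equivalence, set $\widehat\rho=\overline\rho=\ln(n)$ to obtain the $1-2/n$ confidence and the $\log(n)$ factor, justify this by the $\rho$-independence of the constant in the proof of Theorem 3.6, and observe that the tighter clipping at $\min\{1,M_n\}$ can only reduce the error because $f(S)\in[-1,1]$. You supply somewhat more detail than the paper (the nonexpansive-projection argument for the clipping and the explicit hypothesis checklist), but the substance is identical.
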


The proof of Theorem 1 follows by verifying the Assumption \ref{assn:plr} using assumptions A1-A6 and Facts 1-3.

\begin{proof}[Proof of Theorem 1]
    First note that estimation of $\bbeta_0$ by $\widehat{\bbeta}_{DSR}$ using Algorithm 1 is equivalent to using DML2 in Definition 3.2 of \cite{dml}, under the Partially Linear Regression DML estimation established by Lemma \ref{theorem:plr}, and estimation of the nuisance parameters uses the same Gaussian Process (GP) estimates used in Facts \ref{l:bounded_rate} and \ref{l:unbounded_rate}. Therefore, the result of Theorem 1 follows from satisfying Assumption \ref{assn:plr}, which in turn is achieved in part by satisfying the assumptions for Facts \ref{l:bounded_rate} and \ref{l:unbounded_rate} which establish the necessary convergence rates for prediction using GP regression.

    In Assumption \ref{assn:plr}, (a) follows by Assumption A1, (b) follows from the assumption of bounded or normal distributions of $U, V_{j}$, so that all moments are finite, and the assumption that $\bbeta_0\in\mathbb{R}^p$, and (c) and (d) follow from Assumption A2.

     To satisfy (e) in Assumption \ref{assn:plr}, first note that by Assumption A4 the components of $\eta_0$ are bounded in some interval, and that the nuisance parameter estimates are clipped accordingly to reside in some interval, satisfying $\|\eta_0 - \widehat{\eta}_0\|_{P,\infty} < C$ for some $C>0$ (recall the notation $\|\eta_0 - \widehat{\eta}_0\|_{P,q} = \max_j\|\eta_{0j} - \widehat{\eta}_{0j}\|_{P,q}$ for $q \in [0, \infty)$). Next apply Lemmas 2 and 3 to satisfy the convergence rate requirements. Let $\alpha_X \geq \frac{d}{2}$ and $\alpha_X >1$ per Assumption A6. Then if each $m_{j0}$ is estimated by $\widehat{m}_{j0}$ using a Gaussian process mean with clipping at appropriate bounds as in Facts \ref{l:bounded_rate} and \ref{l:unbounded_rate}, then with probability no less than $1-\frac{2}{n}$, $\| \widehat{m}_{j0}- m_{j0}\|_{P,2} \leq C_j \log (n) n^{-\frac{\alpha_X}{2\alpha_X + d} + \xi}$ for any $\xi>0$ and some $C_j>0$, and $n^{-\frac{\alpha_X}{2\alpha_X + d}} < n^{-1/4}$. Let $\gamma = \frac{\alpha_X}{2\alpha_X + d} - 1/4 > 0$. Then with probability no less than $1-\frac{2}{n}$, $\| \widehat{m}_{j0}- m_{j0}\|_{P,2} \leq C_j \log (n) n^{-1/4}n^{-\gamma}n^{\xi}$, for all $\xi > 0$. Pick $\xi < \gamma$. Let $\gamma^* = \gamma - \xi > 0$. This holds for $j=1, ..., p$; let $C_m$ be greater than or equal to all $C_j$. Then letting $\delta'_n = C_m(\log(n)n^{-\gamma^*}) \vee n^{-1/4} \rightarrow 0$, we have that $\| \widehat{m}_{j0}- m_{j0}\|_{P,2} \leq \delta'_n n^{-1/4}$ with probability no less than $1-\frac{2}{n}$ for all $m_{0j}$, $j=1, ..., p$, and if $\alpha_Y \geq \frac{d}{2}$ and $\alpha_Y > \frac{d^2}{4\alpha_X}$, the analogous result holds for $g_0$ as well.
     
     Since all of the errors (of estimates of $g_0, m_{01}, ..., m_{0p})$ individually obey the desired rates, each with marginal probability no less than $1 - \frac{2}{n}$, there exists a sequence $\Delta \rightarrow 0$ such that with probability no less than $1 - \Delta_n$, all estimates of $g_0, m_{01}, ..., m_{0p}$ simultaneously obey the desired error bounds. To see why, let $A_{kn}$, $k=1, ..., K$, $n=1,2, ...$ be a finite number $K$ of sequences of events, such that $P(A_{kn}) \rightarrow 1$ as $n \rightarrow \infty$ for each $k$. $P(A_{1n} \cup A_{2n}) = P(A_{1n}) + P(A_{2n}) - P(A_{1n}\cap A_{2n})$, and since $P(A_{1n}\cup A_{2n}) \geq P(A_{1n}), P(A_{2n})$, we have that $P(A_{1n}\cup A_{2n}) \rightarrow 1$ as $n \rightarrow \infty$. Hence, $\lim_{n\rightarrow \infty} P(A_{1n}\cap A_{2n})= \lim_{n\rightarrow \infty} P(A_{1n}) + \lim_{n\rightarrow \infty} P( A_{2n})-\lim_{n\rightarrow \infty} P(A_{1n}\cup A_{2n}) = 1$. Applying induction establishes that $\lim_{n\rightarrow \infty} P(A_{1n}\cap A_{2n}\cap ... \cap A_{Kn})=1$. Therefore, there exists some sequence $L_n \rightarrow 0$ such that $P(A_{1n}\cap A_{2n}\cap ... \cap A_{Kn})\geq 1-L_n$.
     
     Therefore, there exists some sequence $\Delta_n \rightarrow 0$ such that $\|\widehat{g}_0-g_0\|_{P,2} \times \|\widehat{m}_{0j} - m_{0j}\|_{P,2} \leq C \delta_n n^{-\frac{1}{2}}$, $\|\widehat{m}_{0j} - m_{0j}\|^2_{P,2} \leq C \delta_n n^{-\frac{1}{2}}$, and $\|\widehat{\eta}_{0} - \eta_{0}\|_{P,2} \leq \delta_n$ for $j=1, ..., p$, with $\delta_n = \delta_n'^2 \geq n^{-1/2}$, with probability no less than $1 - \Delta_n$.

     Thus part (e) of Assumption \ref{assn:plr} is satisfied.

      With parts (a)-(e) of Assumption \ref{assn:plr} satisfied, apply Lemma \ref{theorem:plr} to obtain $  \widehat{Var}(\widehat{\bbeta}_0)^{-1/2}(\widehat{\bbeta}_0 - \bbeta_0) \xrightarrow[]{D} N(\mathbf{0},\mathbf{I}_p)$.
\end{proof}

 \bibliographystyle{biom} 
\bibliography{refs.bib}

\end{document}